\algnewcommand\algorithmicupon{\textbf{upon}}
\newcommand{\ALGtikzmarkcolor}{black}
\newcommand{\ALGtikzmarkextraindent}{4pt}
\newcommand{\ALGtikzmarkverticaloffsetstart}{-.5ex}
\newcommand{\ALGtikzmarkverticaloffsetend}{-.5ex}
\newcounter{ALG@tikzmark@tempcnta}
\newcommand\ALG@tikzmark@start{%
	\global\let\ALG@tikzmark@last\ALG@tikzmark@starttext%
	\expandafter\edef\csname ALG@tikzmark@\theALG@nested\endcsname{\theALG@tikzmark@tempcnta}%
	\tikzmark{ALG@tikzmark@start@\csname ALG@tikzmark@\theALG@nested\endcsname}%
	\addtocounter{ALG@tikzmark@tempcnta}{1}%
}
\def\ALG@tikzmark@starttext{start}
\newcommand\ALG@tikzmark@end{%
	\ifx\ALG@tikzmark@last\ALG@tikzmark@starttext
	\else
	\tikzmark{ALG@tikzmark@end@\csname ALG@tikzmark@\theALG@nested\endcsname}%
	\tikz[overlay,remember picture] \draw[\ALGtikzmarkcolor] let \p{S}=($(pic cs:ALG@tikzmark@start@\csname ALG@tikzmark@\theALG@nested\endcsname)+(\ALGtikzmarkextraindent,\ALGtikzmarkverticaloffsetstart)$), \p{E}=($(pic cs:ALG@tikzmark@end@\csname ALG@tikzmark@\theALG@nested\endcsname)+(\ALGtikzmarkextraindent,\ALGtikzmarkverticaloffsetend)$) in (\x{S},\y{S})--(\x{S},\y{E});%
	\fi
	\gdef\ALG@tikzmark@last{end}%
}
\apptocmd{\ALG@beginblock}{\ALG@tikzmark@start}{}{\errmessage{failed to patch}}
\pretocmd{\ALG@endblock}{\ALG@tikzmark@end}{}{\errmessage{failed to patch}}
\begin{document}
\sloppy
\title{On the Solvability of Byzantine-tolerant Reliable Communication in Dynamic Networks\thanks{Corresponding author: Giovanni Farina - \url{giovanni.farina@unicusano.it}
\\\emph{This work is dedicated to Alexandre Maurer, whose work sparked this research journey.}
}}
\titlerunning{Reliable Communication in Dynamic Networks}
%
\author{Silvia Bonomi\inst{1}\orcidID{0000-0001-9928-5357} \and
Giovanni Farina\inst{2}\orcidID{0000-0002-4792-5305}
\and
S\'{e}bastien Tixeuil\inst{3,4}\orcidID{0000-0002-0948-7172}}
\authorrunning{S. Bonomi et al.}
%
\institute{Sapienza University of Rome, Italy
\and
Department of Engineering, Niccolò Cusano University, Italy
\and
Sorbonne Université, CNRS, LIP6
\and
Institut Universitaire de France
}
\maketitle              
\begin{abstract}
A reliable communication primitive guarantees the delivery, integrity, and authorship of messages exchanged between correct processes of a distributed system.  
We investigate the necessary and sufficient conditions for reliable communication in dynamic networks, where the network topology evolves over time despite the presence of a limited number of Byzantine faulty processes that may behave arbitrarily (i.e., in the globally bounded Byzantine failure model).  
We identify classes of dynamic networks where such conditions are satisfied, and extend our analysis to message losses, local computation with unbounded finite delay, and authenticated messages. 

\keywords{Reliable Communication \and Byzantine fault-tolerance \and Dynamic Network \and Evolving Graph}
\end{abstract}



%
%
%
\setcounter{page}{1}

\begin{center}
    \small \textit{This technical report presents the complete version of the paper published in the SIROCCO 2026 conference proceedings \cite{10.1007/978-3-032-26465-7_7}.}
\end{center}

\section{Introduction}
The reliable communication primitive is a fundamental building block for distributed systems. 
It guarantees the proper exchange of messages between correct processes, even if they are not directly connected by a communication link or if some participating processes behave maliciously (i.e., are Byzantine).
In particular, the primitive ensures the \textit{authorship}, \textit{integrity}, and \textit{delivery} of the information exchanged between correct processes. 
More precisely, it mandates that \emph{(i)} all messages exchanged between correct processes are not modified during their propagation (integrity), \emph{(ii)} messages eventually reach their destination (delivery), and \emph{(iii)} the author of a message cannot be forged (authorship).
Implementations of the reliable communication primitive have been studied in several settings~\cite{DBLP:conf/sss/BonomiFT18,DBLP:journals/jbcs/BonomiFT19,DBLP:conf/focs/Dolev81,DBLP:conf/wdag/MaurerT12,DBLP:conf/srds/MaurerTD15,DBLP:journals/networks/Pelc96,DBLP:journals/ipl/PelcP05}. 
In this paper, we consider the \textit{globally bounded Byzantine failure model} (i.e., there is an upper bound $f$ on the number of Byzantine faulty processes, and $f$ is known to all correct processes), and a \textit{dynamic communication network}. 
To the best of our knowledge, the work of Maurer et al.~\cite{DBLP:conf/srds/MaurerTD15} is the only one that analyzes the solvability of the primitive in such a setting. 
It characterizes the necessary and sufficient conditions for reliable communication from a specific source to a defined target at a given time. 
However, the verification of the conditions identified by Maurer et al.~\cite{DBLP:conf/srds/MaurerTD15} is equivalent to solving a NP-complete problem~\cite{DBLP:journals/tcs/FluschnikMNRZ20,DBLP:journals/jcss/KempeKK02}.

In this work, we first extend such conditions to characterize when \emph{all} correct processes can achieve reliable communication in a synchronous system \emph{independently of the diffusion time}. 
We identify classes of dynamic networks that \emph{(i)} satisfy the solvability conditions \emph{(ii)} and where verifying that a network belongs to these classes is polynomial in the number of processes in the system.
We then expand our analysis to include an asynchronous system with lossy links and find that the solvability conditions for synchronous systems also apply to asynchronous ones.
Finally, we examine the case of authenticated messages.

\section{Related Work}
The reliable communication problem, which aims to implement a primitive to guarantee the delivery, integrity, and authorship of messages exchanged in a distributed system, has been the subject of extensive research in static networks under various assumptions. These include various types of faults, such as omissions, crashes, and arbitrary faults; alternative fault distributions, including deterministic and probabilistic models; and the impact of these faults on links ~\cite{DBLP:journals/networks/Pelc92}, processes, or both.
In a seminal work, Dolev~\cite{DBLP:conf/focs/Dolev81} identified the necessary and sufficient condition for this problem in static networks with reliable and authenticated links, assuming an upper bound $f$ on the number of Byzantine faulty processes. The condition states that the network must be $(2f+1)$-connected to tolerate $f$ such processes.

Subsequent work investigated more constrained process failure distributions. 
Koo~\cite{koo2004broadcast} analyzed the reliable communication problem assuming only a fraction of nodes in a process's neighborhood can be compromised. 
Pelc and Peleg~\cite{DBLP:journals/ipl/PelcP05} generalized Koo's results by characterizing a failure model with an upper bound on faulty processes in each node's neighborhood and defining a corresponding solution.  
Pagourtzis et al.~\cite{DBLP:journals/dc/PagourtzisPS17} further generalized these models by assuming non-homogeneous local bounds on the number of faulty processes in each node's neighborhood. 
They also showed this failure model can be extended by assuming specific sets of potentially faulty processes (the \textit{general adversary model}), and that knowledge of the network topology increases the number of faulty processes that can be tolerated in the non-homogeneous, locally bounded setting.

Most solutions to the reliable communication problem rely on node-disjoint path redundancy, thus requiring highly connected networks. 
Consequently, several works have investigated weaker Byzantine-tolerant reliable communication primitives for use in loosely connected networks. These primitives may allow a small minority of correct processes to either deliver invalid messages or fail to deliver genuine ones~\cite{DBLP:journals/jpdc/MaurerT14,DBLP:journals/tpds/MaurerT15,DBLP:journals/ppl/MaurerT16}.

Over the past two decades, the integration of computing and communication capabilities into a growing number of devices (\emph{e.g.}, IoT networks) has spurred increased interest in the modeling and analysis of dynamic distributed systems.
Most of the dynamic distributed system models proposed so far can be categorized as either \textit{open} or \textit{closed}. The former involves new processes continuously entering and leaving the system (a phenomenon often referred to as \textit{churn}), while the latter assumes a fixed set of processes whose communication links may change over time.
In the context of closed dynamic networks, Maurer et al.~\cite{DBLP:conf/srds/MaurerTD15} identified the necessary and sufficient conditions for solving a single instance of the reliable communication problem in the presence of a subset of $f$ Byzantine faulty processes. 
Bonomi et al.~\cite{DBLP:conf/sss/BonomiFT18,BONOMI2024104952} identified the solvability conditions assuming the homogeneous locally bounded failure model of Pelc and Peleg~\cite{DBLP:journals/ipl/PelcP05}.
Maurer~\cite{DBLP:conf/opodis/Maurer20} later extended the primitive to also withstand transient process failures.

Cryptography (specifically, digital signatures), by ensuring the authenticity and integrity of exchanged information, can be used to achieve Byzantine-tolerant reliable communication~\cite{castro1999practical,drabkin2005efficient}. 
The main advantage of such cryptographic tools is that they enable reliable communication with simpler solutions and under weaker connectivity requirements. 
However, their correctness depends on the underlying cryptosystem and the assumption that the adversary has bounded computing power.
A common assumption of Byzantine-tolerant reliable communication protocols is the use of authenticated point-to-point channels, which prevents a process from impersonating multiple others (a Sybil attack)~\cite{DBLP:conf/iptps/Douceur02}.
The primary distinction between cryptographic (authenticated) and non-cryptographic (unauthenticated) protocols for reliable communication lies in how cryptographic primitives are used: non-cryptographic protocols can only use digital signatures between neighbors for authentication, while cryptographic protocols use them to allow message verification even between nodes that are not directly connected. 
Finally, it is worth noting that cryptography is not strictly required to implement an authenticated channel~\cite{DBLP:journals/wc/ZengGM10}.

This work builds upon the seminal characterization of the reliable communication problem in closed dynamic networks by Maurer et al.~\cite{DBLP:conf/srds/MaurerTD15}, extending it in several directions by: \textit{(i)} specifying the solvability conditions for any pair of processes at any time; \textit{(ii)} identifying classes of dynamic networks where these conditions hold; and \textit{(iii)} considering weaker or alternative models in which messages can be lost and local computation delays are unknown.

\section{Graphs and Evolving Graphs}
A static undirected \textit{graph} is a pair $G = (V,E)$ of sets $V$ and $E$ 
such that $E \subseteq \binom{{V}}{2}$. 
The elements $p_i$ of $V$ are the \textit{vertices} (or \textit{nodes}) of the graph, whereas the elements of $E$ are the (undirected) \textit{edges}~\cite{Diestel_2017}. 
Two vertices $p_s, p_t$ connected by an edge $\{p_s,p_t\} \in E$ are called \textit{neighbors}.
A sequence of distinct nodes $P = (p_1, p_2, \dots, p_m)$ constitutes a \textit{path} if every pair of consecutive nodes $p_i$ and $p_{i+1}$ in the sequence satisfies the condition that they are neighbors. Nodes $p_1$ and $p_m$ in $P$ are referred to as \textit{endpoints}.
Two or more paths are \emph{disjoint} if they share no node except their endpoints.
A graph is \textit{connected} if there exists a path between every pair of nodes.
A graph is \textit{$k$-connected} if removing any subset $S \subset V$ of $k-1$ nodes (and all the edges having at least one node in $S$) results in a connected subgraph. 
If a graph is $k$-connected, then there exists a set of $k$ disjoint paths between all pairs of nodes~\cite{Menger1927}.
The \textit{node connectivity} of a graph is the minimum number of nodes that have to be removed from the graph to disconnect it~\cite{Diestel_2017}.


An \textit{evolving graph} (\emph{a.k.a.} temporal graph) $\mathcal{G} = (G_0, G_1, \dots, G_j, \dots)$~\cite{DBLP:journals/network/Ferreira04} is  a sequence of static undirected graphs, where each graph $G_j := (V,E_j \subseteq \binom{{V}}{2})$ denotes a \emph{snapshot} of $\mathcal{G}$.
All snapshots share the same set of vertices $V$, whereas the set of edges may change at every snapshot. 
An evolving graph is defined over a set of time instants $\mathcal{T} \subseteq \mathbb{N}$, potentially infinite, drawn from the natural numbers, called the \textit{lifetime} of $\mathcal{G}$.
Specifically, every snapshot $G_j \in \mathcal{G}$ is associated with time instant $j \in \mathcal{T}$, and vice-versa.
We say that a particular edge $e$ is \textit{present} (or \textit{appears}) at time $j$ if $e \in E_{j}$ (\emph{i.e.}, $e$ is among the edges of snapshot $G_j$). 
The graph $\mathbb{G} = (V, E = \bigcup_{{j}\in \mathcal{T}} E_j)$ is the \textit{underlying graph}. We refer to $V$ and $E$ as the \textit{vertex set} and \textit{edge set}, respectively, of the evolving graph.

Given a subset $T \subseteq \mathcal{T}$, a \textit{temporal subgraph} $\mathcal{G}_{T}$ of $\mathcal{G}$ is the evolving graph that restricts the lifetime of $\mathcal{G}$ to the instants $j \in T$, namely, $\mathcal{G}_{T}$ contains only the snapshots $G_j$ in $\mathcal{G}$ such that $j \in T$. 
Then, $\mathcal{G}_{[x,y]}$ is the temporal subgraph induced by the contiguous interval $[x,y]$ of $\mathcal{T}$, i.e., $\mathcal{G}_{[x,y]}$ limits the lifetime of $\mathcal{G}$ to the period $[x, y]$. 
Similarly, the temporal subgraph $\mathcal{G}_{[x,*]}$ restricts the lifetime of $\mathcal{G}$ to the time instants after $x$ ($x$ included).
A \textit{spatial subgraph} $\mathcal{G}[\bar{V}, \bar{E}]$ is the evolving graph resulting from $\mathcal{G}$ when considering the set $\bar{V} \subseteq V$ as vertex set, and $\bar{E} \subseteq (E \cap \binom{\bar{V}}{2})$ as edge set. 
We specify a spatial subgraph only by its vertex set $\bar{V}$, namely $\mathcal{G}[\bar{V}]$, if its edge set $\bar{E}$ consists of all the edges in the edge set that have both endpoints in $\bar{V}$, i.e. $\bar{E} = E \cap \binom{\bar{V}}{2}$.
A \textit{spatial temporal} subgraph $\mathcal{G}[\bar{V}]_{T}$ of $\mathcal{G}$ considers a subset $T$ of the lifetime $\mathcal{T}$, a subset $\bar{V}$ of the vertices ${V}$ as vertex set, and $\bar{E} = E \cap \binom{\bar{V}}{2}$
as edge set.
Figure~\ref{fig:tvg_example} shows a graphical example of an evolving graph spanning over three time instants, while Figure~\ref{fig:sttvg_example} presents a spatial temporal subgraph of the former, removing node $p_2$ and snapshot $G_2$.

A \textit{journey} is the analogue of a path in an evolving graph.


\begin{definition}[Journey~\cite{DBLP:conf/srds/MaurerTD15}\footnote{Any definition or theorem followed by a reference has been defined or proven in that reference. All other definitions and theorems are novel to this work.\\This definition has been simplified due to the absence of the latency function in the model we adopted, and has been adapted to a discrete lifetime.}]
    Given an evolving graph $\mathcal{G} = (G_0, G_1, \dots, G_j, \dots)$, where each $G_j = (V, E_j)$, and two of its nodes $p_1, p_m \in V$,
    a \emph{journey} from $p_1$ to $p_m$, denoted with $p_1 \rightsquigarrow p_m$, is a pair  $J = (A,B)$ of two sequences $A$ and $B$ such that: $A$ is
    a sequence of distinct nodes $(p_1, p_2, \dots, p_m)$ 
    and $B$ is a strictly increasing sequence of time instants $(j_I, j_{II}, \dots, j_{m-1})$, with $j_\iota \in \mathcal{T}$, such that for all $i \in \{1, \dots, m-1\}$, it holds that $\{p_i, p_{i+1}\} \in E_{j_\iota }$.

\end{definition}


\begin{definition}[Set $\Sigma(\mathcal{G},p_s,p_t)$ of node sets between two vertices~\cite{DBLP:conf/srds/MaurerTD15}]
    Given an evolving graph $\mathcal{G}$ and two of its nodes $p_s, p_t \in V$, $\Sigma(p_s,p_t)$ refers to the set of node sets $\{\{p_1, \dots, p_l\} | (p_s, p_1, \dots, p_l, p_t)$ is a journey in $\mathcal{G}\}$.
\end{definition}

\begin{definition}[Hitting set~\cite{DBLP:books/fm/GareyJ79}]
    Let $\Omega = \{S_1, S_2, \dots, S_m\}$ be a family of subsets of a universe $U$, where $S_i \subseteq U$ for each $i \in \{1, \dots, m\}$. 
    A set $H \subseteq U$ is called a \emph{hitting set} for $\Omega$ if it intersects every set in $\Omega$, that is,
\[
\forall i \in \{1, \dots, m\}, \quad H \cap S_i \neq \emptyset.
\]

\end{definition}

\begin{definition}
    [Minimum Hitting Set~\cite{DBLP:books/fm/GareyJ79} and {MinCut}~\cite{DBLP:conf/srds/MaurerTD15}]
    Let $\Omega = \{S_1, S_2, \dots, S_m\}$ be a family of subsets of a universe $U$, where $S_i \subseteq U$ for each $i \in \{1, \dots, m\}$. A set $H \subseteq U$ is called a \emph{minimum hitting set} for $\Omega$ if:
\begin{itemize}
  \item $H$ is a hitting set for $\Omega$, i.e., $H \cap S_i \neq \emptyset$ for all $i \in \{1, \dots, m\}$, and
  \item $H$ has the smallest possible cardinality among all hitting sets for $\Omega$, i.e., for every hitting set $H' \subseteq U$, it holds that $|H| \leq |H'|$.
\end{itemize}

  We refer with $MinCut(\Omega)$ to the cardinality of a minimum hitting set for $\Omega$.

\end{definition}




\begin{definition}[Dynamic minimum cut size $k$ between two nodes and $p_s \rightsquigarrow_k p_t$ ~\cite{DBLP:conf/srds/MaurerTD15}]
    Given an evolving graph $\mathcal{G}$, two of its nodes $p_s, p_t \in V$, the dynamic minimum cut size $k$ from $p_s$ to $p_t$ is the smallest number of other nodes to remove from $\mathcal{G}$ so that no journey exists from $p_s$ to $p_t$.
    Specifically, $$DynMinCut(\mathcal{G},p_s,p_t) = MinCut(\Sigma(\mathcal{G},p_s,p_t))$$
    Note that if $\exists j \in \mathcal{T} : \{p_s, p_t\} \in E_j$, then $DynMinCut(\mathcal{G},p_s,p_t) = \infty$. 

    We denote with $p_s \rightsquigarrow_k p_t$ a set of journeys in $\mathcal{G}$ from $p_s$ to $p_t$ having a dynamic minimum cut size at least $k$.
\end{definition}


In the same way, various families of graph have been defined in graph theory (trees, planar graphs, grids, etc.), and several classes of evolving graphs have been characterized in the literature~\cite{DBLP:journals/paapp/CasteigtsFQS12,DBLP:books/hal/Casteigts18}. Specifically, a class of evolving graphs groups all graphs that satisfy a specific set of properties. We recall or define some relevant classes to our work in Section \ref{sec:classes}.

\begin{figure}
    \begin{minipage}[t]{0.5\linewidth}
        \centering
        \begin{minipage}[b]{0.32\linewidth}
            \centering
            \begin{tikzpicture}[main/.style = {draw, circle}] 
                \node[main] (1) {$1$}; 
                \node[main] (2) [above right of=1] {$2$};
                \node[main] (3) [below right of=1] {$3$}; 
                \node[main] (4) [above right of=3] {$4$};
                \draw (1) -- (2);
                \draw (1) -- (3);
            \end{tikzpicture} \\
            \vspace{2pt}
            (a) $1 \in \mathcal{T}$
        \end{minipage}
        \begin{minipage}[b]{0.32\linewidth}
        \centering
            \begin{tikzpicture}[main/.style = {draw, circle}] 
                \node[main] (1) {$1$}; 
                \node[main] (2) [above right of=1] {$2$};
                \node[main] (3) [below right of=1] {$3$}; 
                \node[main] (4) [above right of=3] {$4$};
                \draw (1) -- (2);
                \draw (2) -- (4);
            \end{tikzpicture}\\
            \vspace{2pt}
            (b) $2\in \mathcal{T}$
        \end{minipage}
        \begin{minipage}[b]{0.32\linewidth}
        \centering
                \begin{tikzpicture}[main/.style = {draw, circle}] 
                    \node[main] (1) {$1$}; 
                    \node[main] (2) [above right of=1] {$2$};
                    \node[main] (3) [below right of=1] {$3$}; 
                    \node[main] (4) [above right of=3] {$4$};
                    \draw (1) -- (3);
                    \draw (2) -- (4);
                    \draw (3) -- (4);
                \end{tikzpicture}\\
                \vspace{2pt}
                (c) $3\in \mathcal{T}$
        \end{minipage}
        \caption{An evolving graph example.}
        \label{fig:tvg_example}
    \end{minipage}%
    \begin{minipage}[t]{0.5\linewidth}
        \centering
        \begin{minipage}[b]{0.32\linewidth}
            \centering
            \begin{tikzpicture}[main/.style = {draw, circle}] 
                \node[main] (1) {$1$}; 
                \node[main] (3) [below right of=1] {$3$}; 
                \node[main] (4) [above right of=3] {$4$};
                \draw (1) -- (3);
            \end{tikzpicture} \\
            \vspace{2pt}
            (a) $1 \in \mathcal{T}$
        \end{minipage}
        \begin{minipage}[b]{0.32\linewidth}
        \centering
                \begin{tikzpicture}[main/.style = {draw, circle}] 
                    \node[main] (1) {$1$}; 
                    \node[main] (3) [below right of=1] {$3$}; 
                    \node[main] (4) [above right of=3] {$4$};
                    \draw (1) -- (3);
                    \draw (3) -- (4);
                \end{tikzpicture}\\
                \vspace{2pt}
                (b) $3 \in \mathcal{T}$
        \end{minipage}
        \caption{A spatial temporal subgraph of the evolving graph in Figure \ref{fig:tvg_example}}
        \label{fig:sttvg_example}
    \end{minipage}
\end{figure}

\section{System Model and Problem Statement}
We consider a distributed system composed of a fixed set of $n$ processes $N = \{p_1, p_2 \dots, p_n\}$, each one associated with a unique integer identifier.
The evolution of the system is characterized by events occurring at specific times defined by a fictitious global clock spanning the natural numbers $\mathbb{N}$.

Processes can communicate with each other by exchanging \emph{messages} over a \emph{dynamic communication network} composed of \emph{point-to-point links}. 
The dynamic communication network is modeled by an {evolving graph} $\mathcal{G} = (G_0, G_1, \dots, G_j, \dots)$. Each \emph{snapshot} $G_j = (V, ~ E_j)$ corresponds to the actual communication network at time $j$ where $V = N$ represents the set of processes participating in the system and $E_j \subseteq N \times N$ is the actual set of \emph{existing} (\textit{present}) edges at time $j$ (i.e., communication links available for the point-to-point communication).
In the following, we will interchangeably use the terms \textit{node} and \textit{process}, and the terms \textit{link} and \textit{edge}.
The evolving graph characterizes the communication network for the entire lifetime of the system.
At each time $j$, processes can only communicate by using present links, i.e., they can send/receive messages to/from their neighbors in the snapshot $G_j$. 
We refer to \emph{multicast} when a process sends a message through all of its available links, namely, to all of its current neighbors.
Each message $m$ is associated with a \textit{source/author} and a \textit{sender}: the source is the process \textit{that generates} message $m$, the sender is the process \textit{that relays} message $m$ through a link. 
The source and sender of a message may coincide.

At each time instant $j$, every process executes the following concurrent steps: \emph{compute}, \emph{send}, and \emph{receive}. Specifically, it carries out local computation and prepares (enqueues) messages to be sent to its neighbors, transmits these messages (if any), and receives messages from other processes (if any). 
Consequently, a message is able to traverse only one link during a single time instant $j$.

Processes execute a distributed protocol $\mathcal{P}$.
Processes can be either \textit{correct} or \textit{Byzantine faulty}. 
A correct process executes the protocol $\mathcal{P}$, Byzantine faulty processes can behave arbitrarily instead. 
In particular, while running $\mathcal{P}$, faulty processes can send arbitrary messages or omit to send/receive all or part of them. We assume that at most $f$ processes can be faulty (globally bounded Byzantine failure model).

To model the asynchrony and delays that may occur in the system, we consider two types of point-to-point communication primitives that characterize the behavior of the links (PL and FLL), and two alternative settings for the local computation delay of processes (SC and AC):
\begin{itemize}
    \item \textit{perfect link} (PL): it provides the \textit{reliable delivery} property~\cite{DBLP:books/daglib/0025983}, namely that if a correct process $p_s$ sends a message $m$ to a correct process $p_r$ at time $j$ via the present link, then $p_r$ receives $m$ at the same time $j$;
    %
    \item \textit{fair-loss link} (FLL): it guarantees the \textit{fair-loss} property~\cite{DBLP:books/daglib/0025983}, namely that if a correct process $p_s$ infinitely often (at distinct times) sends a message $m$ to a correct process $p_r$ via the present link, then $p_r$ receives $m$ an infinite number of times;
    \item \textit{synchronous computation} (SC): the local computation time (i.e., the time to perform the compute step) is negligible and is assumed to be equal to $0$;
    \item \textit{asynchronous computation} (AC): the local computation (i.e., the compute step) takes a finite and unknown amount of time.
\end{itemize}

\noindent For the sake of modeling, we assume that \textit{(i)} link properties only hold when links are present, that \textit{(ii)} links have \emph{unbounded capacity} (namely, if a correct process $p_s$ sends an arbitrary set $M$ of messages to a correct process $p_r$ at time $j$, then $p_r$ receives all messages in $M$ at the same time instant $j$ (PL or FLL) or they are lost (FLL)), and \textit{(iii)} that processes can instantly detect whether their links are available or not.
The set of assumptions PL and SC characterizes a synchronous distributed system where processes and links are able to accommodate every message propagation enabled by the dynamic communication network $\mathcal{G}$, namely, every journey in $\mathcal{G}$ represents a feasible propagation pattern for a message between its endpoints (for an example, the journey $(\left(p_s,p_r,p_t\right),(1,2))$ models the possible propagation of a message exchanged at time $1$ from $p_s$ to $p_r$ and then forwarded by $p_r$ to $p_t$ at time $2$). 
It follows that a message generated by a process $p_s$ at time $j$ in a PL-SC system can potentially reach any process $p_t$ such that there exists a journey from $p_s$ to $p_t$ in $\mathcal{G}_{[j,*]}$.
The FLL assumption characterizes potentially asynchronous and lossy links in this setting, meaning that a process may attempt a finite but unknown number of times to send a message on a link that is infinitely often present before its first reception.
The AC assumption characterizes the potential inability of processes to perform timely computation and generate messages to send for a finite amount of time.
Note that the AC assumption does not prevent processes from receiving messages.

Finally, we consider two alternative settings that limit the capability of faulty processes:
\begin{itemize}
    \item \textit{authenticated links} (AL): the identity of the sender of a message $m$, i.e., the process sending $m$ through a point-to-point link, cannot be forged;
    \item \textit{authenticated messages} (AM): the identity of the author of a message $m$, namely, the process that generated a message $m$, cannot be forged.
\end{itemize}

\noindent Note that the AM assumption can guarantee the authenticity of the author of a message over multiple links, while the AL assumption can guarantee the authenticity of the author only if it matches the sender of the message, and that the authenticated message setting does not implicitly assume authenticated links.

In the rest of the paper, we will indicate the setting under consideration, in terms of links and local computation assumptions, by specifying a triple $\langle \alpha, \beta, \gamma \rangle$ where $\alpha \in \{PL, FLL\}$, $\beta \in \{SC, AC\}$ and $\gamma \in \{AL, AM\}$.

\subsection*{The Reliable Communication Problem}
We aim at analyzing the \textit{reliable communication problem}, whose goal is the definition of a communication primitive that allows correct processes, not directly connected by a link, to exchange \textit{contents} guaranteeing their authorship, integrity, and delivery.

Let us denote as \textit{source} or \textit{author} the process $p_s$ that generates a content and as \textit{target} $p_t$ the peer to which such content is addressed.
A Reliable Communication ({\texttt{RC}) primitive is accessible by every process in the system and exposes two operations: \mbox{\texttt{{RC.send($p_t$, $c$)}}} and \texttt{RC.deliver($p_s$, $c$)}. 
The \texttt{send} operation is invoked by the source to disseminate a content, and the \texttt{deliver} operation notifies a process about the delivery of a content.\\
A protocol $\mathcal{P}$ implements a reliable communication primitive if it satisfies the following properties:
\begin{itemize}
	\item \textbf{\textit{safety}}: if $p_t$ is a \textit{correct} process and delivers a content $c$ from $p_s$, then $p_s$ previously sent $c$;
	\item \textbf{\textit{liveness}}: if $p_s$ is a \textit{correct} process and sends a content $c$ to a correct process $p_t$, then $p_t$ eventually delivers $c$ from $p_s$.
\end{itemize}

For the sake of notation, we denote by \textit{content} the payload exchanged by a reliable communication primitive, and by \textit{message} the unit of information exchanged by a distributed protocol over a point-to-point link.

We say that an \textit{instance} of the reliable communication problem \textit{starts at time $j$} (or simply \textit{at time} $j$) if at time $j$ the \texttt{RC.send} operation is executed, and that it \textit{terminates at time $j$} if the target process $p_t$ executes the \texttt{RC.deliver($p_s$, $c$)} operation for the first time at time $j$.

\section{Solvability Conditions}
We characterize the solvability of the reliable communication problem in several settings. 
We start by recalling the result of Maurer et al.~\cite{DBLP:conf/srds/MaurerTD15} who established the necessary and sufficient conditions for a \textit{one-to-one} (i.e., from a defined source to a fixed target) reliable communication starting at time $j$, in a setting where the links are perfect and authenticated and the computation is synchronous.
We then extend their conditions to any starting time and any pair of processes.
Afterwards, we look for further classes of evolving graphs where the learned conditions are verified.
Interestingly, the stricter enabling class we identify for the PL and SC setting remains minimal also for the weakest setting we consider, namely asynchronous computation and fair-loss links; 
another class instead, allows to upper bound the latency of any reliable communication instance when synchronous computation and perfect links are assumed. 
Finally, we extend all of our results to the settings where messages are authenticated, and we provide an analysis on the computational complexity of asserting class membership for the dynamic network classes we identified.
{All the results that follow, cited or provided, are based on the existence of a specific set of journeys in the communication network that support the propagation of a content in the considered settings.
Note that past theorems and definitions may be rephrased to fit our notations.}

Some of the intermediate results and proofs are reported in Appendix \ref{sec:results}.

\subsection{Classes of Evolving Graphs}
\label{sec:classes}

In this paper, we consider various classes of evolving graphs. While Definitions~\ref{def:temporalreach}-\ref{def:recedges} were previously considered in the literature~\cite{DBLP:journals/paapp/CasteigtsFQS12}, Definitions~\ref{def:noname0}-\ref{def:reckedges} are newly defined to address the presence of Byzantine processes.

\begin{definition}[Class $\mathcal{J}_{(s,t)}$ - Temporal Reachability]
    \label{def:temporalreach}
    The class $\mathcal{J}_{(s,t)}$ is the set of all evolving graphs $\mathcal{G}$ where there exists a journey from node $p_s$ to node $p_t$.
\end{definition}

\begin{definition}[Class $\mathcal{TC}$ - Temporal Connectivity]
    \label{def:temporalconnectivity}
    The class $\mathcal{TC}$ is the set of all evolving graphs $\mathcal{G}$ where $\forall p_s,p_t \in V$, there exists $(p_s \rightsquigarrow p_t) \in \mathcal{G}$ (the class of evolving graphs where a journey exists between every pair of nodes).
\end{definition}

\begin{definition}[Class $\mathcal{J}^\mathcal{R}_{(s,t)}$ - Recurrent Reachability]
    \label{def:rectemporalreach}
    The class $\mathcal{J}^\mathcal{R}_{(s,t)}$ is the set of all evolving graphs $\mathcal{G}$ where $\forall j \in \mathcal{T}$, $\mathcal{G}_{[j,*]} \in \mathcal{J}_{(s,t)}$.
\end{definition}

\begin{definition}[Class $\mathcal{TC}^\mathcal{R}$ - Recurrent temporal connectivity]
	\label{def:tcr} 
    The class $\mathcal{TC}^\mathcal{R}$ is the set of all evolving graphs $\mathcal{G}$ with infinite lifetime where $\forall j \in \mathcal{T}$, $\mathcal{G}_{[j,*]} \in \mathcal{TC}$ (the class where, for every time instant $j \in \mathcal{T}$, the temporal subgraph $\mathcal{G}_{[j,*]}$ is temporally connected).
\end{definition}

\begin{definition}[Class $\mathcal{C^*}$ - {Always-connected snapshots}, or {1-interval connectivity}]
	\label{def:oneinterval} 
    The class $\mathcal{C^*}$ is the set of all evolving graphs $\mathcal{G}$ where $\forall G_j \in \mathcal{G}$, $G_j$ is connected (the class where every snapshot is a connected graph).
\end{definition}

\begin{definition}[Class $\mathcal{E}^{\mathcal{R}}$ - Recurrent Edges]
	\label{def:recedges}
    The class $\mathcal{E}^{\mathcal{R}}$ is the set of all evolving graphs $\mathcal{G}$ with infinite lifetime where $\forall e \in  E, ~ \forall j \in \mathcal{T}, ~ \exists l > j , e \in E_l$ (the class where every edge is present infinitely often).
\end{definition}

\begin{definition}[Class $\mathcal{J}_{(s,t,k)}$ - $k$-journeys from $p_s$ to $p_t$]
	\label{def:noname0}
    The class $\mathcal{J}_{(a,b,k)}$ is the set of all evolving graphs $\mathcal{G}$ where $\exists (p_s \rightsquigarrow_k p_t) \in \mathcal{G}$ (the class where there exists a set of journeys with a dynamic minimum cut of size at least $k$ from node $p_s$ to node $p_t$).
\end{definition}

\begin{definition}[Class $\mathcal{J}_{(s,t,k)}^\mathcal{R}$ - Recurrent $k$-journeys from $p_s$ to $p_t$]
    \label{def:noname1} 
    The class $\mathcal{J}_{(s,t,k)}^\mathcal{R}$ is the set of all evolving graphs $\mathcal{G}$ with infinite lifetime where $\forall j \in \mathcal{T}$, $\mathcal{G}_{[j,*]} \in \mathcal{J}_{(s,t,k)}$ (the class where the temporal subgraph $\mathcal{G}_{[j,*]}$ is in $\mathcal{J}_{(s,t,k)}$ for every time $j\in \mathcal{T}$).
\end{definition}

\begin{definition}[Class $\mathcal{TC}_k$ - Temporal $k$-Connectivity]
    \label{def:tempKconn} 
    The class $\mathcal{TC}_k$ is the set of all evolving graphs $\mathcal{G}$ where $\forall p_s,p_t \in  V, ~ \exists (p_s \rightsquigarrow_k p_t) \in \mathcal{G}$ (the class where there exists a set of journeys having a dynamic minimum cut of size at least $k$ between every pair of nodes).
\end{definition}

\begin{definition}[Class $\mathcal{TC}^\mathcal{R}_k$ - Recurrent $k$-Temporal-Connectivity]
    \label{def:rktc} 
    The class $\mathcal{TC}^\mathcal{R}_k$ is the set of all evolving graphs $\mathcal{G}$ with infinite lifetime where $\forall j \in \mathcal{T}$, $\mathcal{G}_{[j,*]} \in \mathcal{TC}_k$ - the class where the temporal subgraph $\mathcal{G}_{[j,*]}$ is in $\mathcal{TC}_k$ for every time $j \in \mathcal{T}$.
\end{definition}

\begin{definition}[Class $\mathcal{C}^{*}_k$ - {1-interval $k$-connectivity}] 
	\label{def:ckinterval}
    The class $\mathcal{C}^{*}_k$ is the set of all evolving graphs $\mathcal{G}$ where $\forall G_j \in \mathcal{G}$, $G_j$ is a $k$-connected graph (the class where the node connectivity of every snapshot is at least $k$).
\end{definition}

\begin{definition}[Class $\mathcal{E}^{\mathcal{R}}_k$ (Recurrent Edges $k$-connected)]
	\label{def:reckedges}
    The class $\mathcal{E}^{\mathcal{R}}_k$ is the set of all evolving graphs $\mathcal{G}$ with infinite lifetime where, given $\mathbb{G} = (V,E)$ as underlying graph of $\mathcal{G}$, $\mathbb{G}$ is a k-connected graph and $\forall e \in  E, ~ \forall j \in \mathcal{T}, ~ \exists l > j , e \in E_l$ (the class where the underlying graph is $k$-connected and every edge is present infinitely often).
\end{definition}

\subsection{Authenticated Links}
\label{subsec:al}
Maurer et al.~\cite{DBLP:conf/srds/MaurerTD15} characterized the strict condition to solve a single one-to-one instance of the reliable communication problem in the \textit{perfect authenticated links} and \textit{synchronous computation} setting. {In other words, they identified the message propagation pattern that the processes and the communication network must support (i.e., the ``minimum'' set of journeys that a content must traverse) in order to solve a single instance of the reliable communication problem under the considered settings. We recall their contribution in Appendix \ref{sec:maurer}.


\begin{theorem}[{One-to-one RC at time $j$ in $\langle PL, SC, AL \rangle$}~\cite{DBLP:conf/srds/MaurerTD15}]
    \label{rm:maurer}
    The reliable communication problem can be solved,  starting at time $j$, from a defined source $p_s$ to a fixed target $p_t$, in the \textit{perfect authenticated links} and \textit{synchronous computation} setting, if and only if $\mathcal{G}_{[j,*]} \in \mathcal{J}_{(s,t,2f+1)}$. 
\end{theorem}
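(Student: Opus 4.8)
The plan is to obtain the theorem as an immediate corollary of Lemmas~\ref{lm:mausuff} and~\ref{lm:maunec}, once the class membership is translated into the language of the dynamic minimum cut. First I would unfold Definition~\ref{def:noname0}: by the meaning of the notation $p_s \rightsquigarrow_k p_t$, the membership $\mathcal{G}_{[j,*]} \in \mathcal{J}_{(s,t,2f+1)}$ asserts the existence of a set of journeys from $p_s$ to $p_t$ in $\mathcal{G}_{[j,*]}$ whose dynamic minimum cut size is at least $2f+1$, that is, $DynMinCut(\mathcal{G}_{[j,*]},p_s,p_t) \geq 2f+1$. This single reformulation is the entire bridge between the class-theoretic statement of the theorem and the cut-theoretic statements of the two lemmas.

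With this reformulation in hand, both implications are direct. For the \emph{if} direction I would assume $\mathcal{G}_{[j,*]} \in \mathcal{J}_{(s,t,2f+1)}$, hence $DynMinCut(\mathcal{G}_{[j,*]},p_s,p_t) > 2f$, and invoke Lemma~\ref{lm:maunec} to conclude that Algorithm~\ref{alg:maurer} solves one-to-one RC starting at time $j$ in $\langle PL, NC, AL \rangle$, which exhibits a solving algorithm. For the \emph{only if} direction I would assume that some algorithm solves the problem; Lemma~\ref{lm:mausuff} then yields $DynMinCut(\mathcal{G}_{[j,*]},p_s,p_t) > 2f$, i.e. $DynMinCut(\mathcal{G}_{[j,*]},p_s,p_t) \geq 2f+1$, which is exactly the membership condition.

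The only point deserving explicit mention---hardly an obstacle---is the matching of the two inequality conventions: the class is parametrised by the bound $2f+1$, whereas the lemmas are phrased with the strict bound $> 2f$. These coincide because $DynMinCut$ is a cardinality and therefore a natural number, so $DynMinCut(\mathcal{G}_{[j,*]},p_s,p_t) > 2f$ and $DynMinCut(\mathcal{G}_{[j,*]},p_s,p_t) \geq 2f+1$ express the same condition. I would flag this equivalence rather than glossing over it, since it is the sole content-bearing step; beyond it, the theorem is a notational repackaging of Lemmas~\ref{lm:mausuff} and~\ref{lm:maunec}.
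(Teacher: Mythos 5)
Your proposal is correct and matches the paper's treatment: the theorem is presented as the direct combination of Lemma~\ref{lm:mausuff} (necessity) and Lemma~\ref{lm:maunec} (sufficiency), with the class membership $\mathcal{G}_{[j,*]} \in \mathcal{J}_{(s,t,2f+1)}$ being exactly the condition $DynMinCut(\mathcal{G}_{[j,*]},p_s,p_t) > 2f$ restated via Definition~\ref{def:noname0}. Your explicit remark on the equivalence of the $\geq 2f+1$ and $> 2f$ conventions for an integer-valued quantity is a reasonable touch the paper leaves implicit.
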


\noindent From this seminal result~\cite{DBLP:conf/srds/MaurerTD15}, we can derive the class $\mathcal{J}_{(s,t,k)}$ as the one that characterizes the communication networks where one-to-one reliable communication in $\langle PL, SC, AL \rangle$ can be solved at least once.

We generalize the result by Maurer et al.~\cite{DBLP:conf/srds/MaurerTD15} to any pair of processes and any time $j$. 

\begin{theorem}[RC at time $j$ in $\langle PL, SC, AL \rangle$]
    \label{th:anytoanysync}
    The reliable communication problem can be solved \underline{at time} $j$, in the \textit{perfect authenticated links} and  \textit{synchronous computation} setting, if and only if the dynamic subgraph $\mathcal{G}_{[j,*]} \in \mathcal{TC}_k$ and $k>2f$.
\end{theorem}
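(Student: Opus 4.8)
The plan is to reduce the any-to-any problem to the family of one-to-one instances already characterized by Theorem~\ref{rm:maurer}, exploiting the fact that the class condition $\mathcal{G}_{[j,*]} \in \mathcal{TC}_k$ with $k>2f$ is, by Definitions~\ref{def:tempKconn} and~\ref{def:noname0}, equivalent to the statement that $\mathcal{G}_{[j,*]} \in \mathcal{J}_{(s,t,2f+1)}$ for \emph{every} ordered pair $(p_s,p_t)$: both assert exactly that $DynMinCut(\mathcal{G}_{[j,*]},p_s,p_t)>2f$ for all pairs (recall that $k>2f$ with integer $k$ means $k \geq 2f+1$, matching the ``at least $2f+1$'' phrasing of $\mathcal{J}_{(s,t,2f+1)}$). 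Thus the two directions of the theorem will follow by quantifying the one-to-one result over all pairs, using Lemma~\ref{lm:maunec} for sufficiency and Lemma~\ref{lm:mausuff} for necessity.

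For the sufficient direction, I would run a single shared instance of the Maurer et al. protocol (Algorithm~\ref{alg:maurer}) at every node, observing that the protocol is \emph{source/content oblivious}: its state $\Omega_i$ accumulates tuples $(s,c,S)$ tagged with their source $s$ and content $c$, and the delivery rule (rule 3) is evaluated independently for each $(s,c)$. Hence a single execution simultaneously realizes every one-to-one instance. Fixing an arbitrary pair $(p_s,p_t)$, the hypothesis gives $\mathcal{G}_{[j,*]} \in \mathcal{J}_{(s,t,2f+1)}$, so Lemma~\ref{lm:maunec} guarantees that $p_t$ eventually delivers any content sent by $p_s$ (liveness) and never delivers a content not sent by $p_s$ (safety). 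Since the pair was arbitrary, any-to-any reliable communication is solved.

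For the necessary direction, I would argue at the level of a single pair. If any-to-any RC is solvable at time $j$, then restricting the solution to a fixed pair $(p_s,p_t)$ solves the one-to-one instance from $p_s$ to $p_t$ at time $j$; Lemma~\ref{lm:mausuff} then forces $DynMinCut(\mathcal{G}_{[j,*]},p_s,p_t)>2f$, i.e. $\mathcal{G}_{[j,*]} \in \mathcal{J}_{(s,t,2f+1)}$. As the pair was arbitrary, aggregating over all pairs yields $\mathcal{G}_{[j,*]} \in \mathcal{TC}_k$ with $k>2f$ by Definition~\ref{def:tempKconn}.

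The only delicate point — and the step I expect to require the most care — is justifying that the shared execution in the sufficient direction creates no cross-pair interference that a Byzantine coalition could exploit to break a specific pair. Because the global bound of $f$ faulty processes is the same regardless of which pair is under consideration, and because authenticated links force a Byzantine relayer's true identifier into the set $S$ of every tuple it forwards, the minimum-hitting-set accounting is maintained separately for each $(s,c)$ and the safety argument of Lemma~\ref{lm:maunec} applies verbatim to each pair without any weakening of the $2f$ margin. I would make this explicit rather than treat it as routine, since it is the hinge on which the reduction from many one-to-one instances to a single protocol rests.
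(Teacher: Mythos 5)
Your proposal is correct and follows essentially the same route as the paper: the paper likewise proves this theorem by quantifying the one-to-one characterization of Theorem~\ref{rm:maurer} over all ordered pairs and observing that $\mathcal{TC}_k$ is by construction the conjunction of the $\mathcal{J}_{(s,t,k)}$ conditions. Your additional remarks on running a single shared instance of Algorithm~\ref{alg:maurer} and on the absence of cross-pair interference are more explicit than the paper's argument but do not change the substance.
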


\begin{proof}
    The claim follows from an extension of Theorem \ref{rm:maurer}, by considering any pair of processes as the source and target of a reliable communication instance, and by the construction of class $\mathcal{TC}_k$ (Definition \ref{def:tempKconn}).
    Definition \ref{def:noname0} identifies the class of evolving graphs where the strict condition for one-to-one reliable communication at time $j$ holds (Theorem \ref{rm:maurer}).
    Given any four nodes $p_s,p_t,p_u,p_v \in V$, $\mathcal{G}_{[j,*]} \in \mathcal{J}_{(s,t,2f+1)}$ does not imply $\mathcal{G}_{[j,*]} \in \mathcal{J}_{(u,v,2f+1)}$; therefore, the condition must thus be verified for every pair of nodes.
    Definition \ref{def:tempKconn} extends Definition \ref{def:noname0} by considering any pair of processes; the claim thus follows given the strictness of Theorem \ref{rm:maurer}.
\end{proof}

\noindent The $\mathcal{TC}^\mathcal{R}_k$ class identifies the communication networks where the reliable communication problem is solvable in $\langle PL, SC, AL \rangle$ at any time $j\in\mathcal{T}$.

\begin{theorem}[RC in $\langle PL, SC, AL \rangle$]
    \label{rm:anytoanyAllSync}
    The reliable communication problem can be solved starting \underline{at any time} $j$, in the \textit{perfect authenticated links} and \textit{synchronous computation} setting, if and only if $\mathcal{G} \in \mathcal{TC}^\mathcal{R}_k$ and $k>2f$.  
\end{theorem}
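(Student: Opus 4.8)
The plan is to treat this statement as the recurrent (any-starting-time) lift of the time-$j$ characterization in Theorem~\ref{th:anytoanysync}, exactly as Theorem~\ref{rm:oneToOneAllSync} lifts Theorem~\ref{rm:maurer} in the one-to-one case. The whole argument hinges on unfolding the definition of $\mathcal{TC}^\mathcal{R}_k$ (Definition~\ref{def:rktc}) to its literal meaning, namely ``$\mathcal{G}_{[j,*]} \in \mathcal{TC}_k$ for every $j \in \mathcal{T}$''. I would therefore reduce both directions to a pointwise application of Theorem~\ref{th:anytoanysync} quantified over all $j$, taking care that a single global bound $k > 2f$ serves every starting time and every pair of processes.

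For sufficiency I would assume $\mathcal{G} \in \mathcal{TC}^\mathcal{R}_k$ with $k > 2f$. By Definition~\ref{def:rktc}, for each $j \in \mathcal{T}$ we have $\mathcal{G}_{[j,*]} \in \mathcal{TC}_k$; since $k > 2f$, Theorem~\ref{th:anytoanysync} yields a protocol solving any-to-any reliable communication starting at that $j$ in $\langle PL, NC, AL \rangle$. As this holds for every $j$, the problem is solvable at any starting time, which is what the claim requires.

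For necessity I would argue from the per-time characterization directly: if the problem is solvable starting at any time, then in particular it is solvable at each individual $j \in \mathcal{T}$, so Theorem~\ref{th:anytoanysync} forces $\mathcal{G}_{[j,*]} \in \mathcal{TC}_k$ and $k > 2f$ for each such $j$, and collecting these conditions over all $j$ is precisely membership in $\mathcal{TC}^\mathcal{R}_k$ with $k > 2f$. Here I would make explicit the same monotonicity observation used in the one-to-one proof: since $\mathcal{G}_{[x,*]}$ contains every snapshot of $\mathcal{G}_{[y,*]}$ for $y > x$ plus the extra interval, any set of $k$-journeys surviving in $\mathcal{G}_{[y,*]}$ also survives in $\mathcal{G}_{[x,*]}$, so $\mathcal{G}_{[y,*]} \in \mathcal{TC}_k$ implies $\mathcal{G}_{[x,*]} \in \mathcal{TC}_k$ but not conversely. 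The later-time requirement is thus the stronger one, and because the lifetime is infinite there is no maximal instant to check, so the condition genuinely cannot be collapsed to finitely many times and must hold for all $j$.

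Most of the technical content is inherited from Theorem~\ref{th:anytoanysync} (and beneath it from Lemmas~\ref{lm:mausuff} and~\ref{lm:maunec}), so I expect no new algorithmic work. The only genuinely new obstacle is the quantifier handling in the necessity direction: one must argue that uniform solvability over the infinite set of starting times cannot be reduced to a single late check, which is exactly what the monotonicity remark combined with infinite lifetime establishes. A secondary care point is the uniformity of $k$ — since $\mathcal{TC}_k$ already demands a common $k$ across all node pairs and $\mathcal{TC}^\mathcal{R}_k$ a common $k$ across all times, I would verify that the single bound $k > 2f$ suffices simultaneously for every pair and every starting time, rather than admitting a $j$-dependent or pair-dependent value.
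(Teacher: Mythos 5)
Your proposal is correct and follows essentially the same route as the paper's proof: both directions reduce to a pointwise application of Theorem~\ref{th:anytoanysync} over all starting times via the definition of $\mathcal{TC}^\mathcal{R}_k$, together with the same monotonicity observation that $\mathcal{G}_{[y,*]} \in \mathcal{TC}_k$ implies $\mathcal{G}_{[x,*]} \in \mathcal{TC}_k$ for $y > x$ but not conversely, so the condition cannot be collapsed to a single time. Your added remark on the uniformity of $k$ across pairs and times is a small but sensible point of care that the paper leaves implicit.
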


\begin{proof}
    The claim follows from an extension of Theorem \ref{th:anytoanysync}, by considering any starting time for a reliable communication instance, and by the construction of class $\mathcal{TC}^\mathcal{R}_k$ (Definition \ref{def:rktc}).
    
    Definition \ref{def:tempKconn} identifies the class of evolving graphs where the strict condition for reliable communication at time $j$ holds (Theorem \ref{th:anytoanysync}). Given $x,y \in \mathcal{T}$, where $y > x$, $\mathcal{G}_{[y,*]} \in \mathcal{TC}_k$ implies $\mathcal{G}_{[x,*]} \in \mathcal{TC}_k$, but the opposite relation does not hold; thus, the condition needs to be extended to any time $j \in \mathcal{T}$.
    Definition \ref{def:rktc} extends Definition \ref{def:tempKconn} by considering any time $j \in \mathcal{T}$; the claim thus follows given the strictness of Theorem \ref{th:anytoanysync}.
\end{proof}

We identify further dynamic network sub-classes of $\mathcal{TC}^\mathcal{R}_k$, thus providing additional classes where the primitive is feasible.
We defined a sub-class of $\mathcal{E}^{\mathcal{R}}$ that relates to the $\mathcal{TC}^\mathcal{R}_k$ class and extends the result reported in the Theorem \ref{rm:anytoanyAllSync}.

\begin{theorem}
    \label{th:anytoanyrecurrent}
    Let $\mathcal{G}$ be an evolving graph with infinite lifetime. If there exists a spatial subgraph $\mathcal{G}' := \mathcal{G}[V, \bar{E}]$ of class $\mathcal{E}^{\mathcal{R}}_k$ then $\mathcal{G}$ is in $\mathcal{TC}^\mathcal{R}_k$.
\end{theorem}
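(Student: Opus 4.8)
The plan is to reduce the statement to a construction carried out inside the subgraph, and then transfer it to $\mathcal{G}$ for free. Since $\mathcal{G}' = \mathcal{G}[V,\bar{E}]$ is a spatial subgraph on the same vertex set $V$ with $\bar{E} \subseteq E$ and identical edge timings, every journey of $\mathcal{G}'_{[j,*]}$ is also a journey of $\mathcal{G}_{[j,*]}$; equivalently $\Sigma(\mathcal{G}'_{[j,*]},p_s,p_t) \subseteq \Sigma(\mathcal{G}_{[j,*]},p_s,p_t)$, so a hitting set of the latter family also hits the former and $DynMinCut(\mathcal{G}'_{[j,*]},p_s,p_t) \le DynMinCut(\mathcal{G}_{[j,*]},p_s,p_t)$. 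Thus it suffices to exhibit, for every $j \in \mathcal{T}$ and every pair $p_s,p_t \in V$, a set of $k$ journeys in $\mathcal{G}'_{[j,*]}$ with pairwise disjoint intermediate node sets: these same journeys live in $\mathcal{G}_{[j,*]}$ and witness a dynamic minimum cut of at least $k$ there.

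Fixing such a $j$ and such a pair, I would first extract the static structure. Because the underlying graph $\mathbb{G}' = (V,\bar{E})$ is $k$-connected, Menger's theorem supplies $k$ paths $P_1,\dots,P_k$ between $p_s$ and $p_t$ in $\mathbb{G}'$ that are node-disjoint (sharing only the endpoints $p_s,p_t$), where each $P_i = (p_s,q^i_1,\dots,q^i_{m_i},p_t)$ uses only edges of $\bar{E}$.

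The crux is to realize each static path as a genuine journey that starts after time $j$. Here I would invoke the recurrent-edges property of $\mathcal{E}^{\mathcal{R}}_k$: every edge of $\bar{E}$ is present at infinitely many instants, so after any fixed time it reappears. Proceeding greedily along $P_i$, I pick strictly increasing instants $j \le t_1 < t_2 < \cdots < t_{m_i+1}$ such that the $l$-th edge of $P_i$ is present at $t_l$ — having committed to $t_{l-1}$, the next edge still appears at some later time, which I take as $t_l$. This yields a journey $J_i$ following $P_i$ inside $\mathcal{G}'_{[j,*]}$, hence inside $\mathcal{G}_{[j,*]}$.

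Finally I would bound the cut. The journeys $J_1,\dots,J_k$ have pairwise disjoint intermediate node sets, inherited from the node-disjointness of the $P_i$, and all lie in $\Sigma(\mathcal{G}_{[j,*]},p_s,p_t)$; any hitting set must meet each of these $k$ disjoint sets and so has size at least $k$ (the degenerate case of a direct edge $\{p_s,p_t\}\in\bar{E}$, where one intermediate set is empty, only strengthens the bound, as then no removal of other nodes disconnects $p_s$ from $p_t$). Hence $\mathcal{G}_{[j,*]} \in \mathcal{J}_{(s,t,k)}$; as the pair was arbitrary $\mathcal{G}_{[j,*]} \in \mathcal{TC}_k$, and as $j$ was arbitrary $\mathcal{G} \in \mathcal{TC}^\mathcal{R}_k$. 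I expect the only delicate point to be the greedy time-selection and checking it respects the strict time-increase demanded by the journey definition; the Menger extraction and the disjointness counting are routine.
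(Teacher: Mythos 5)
Your proposal is correct and follows essentially the same route as the paper's proof: extract $k$ node-disjoint paths in the $k$-connected underlying graph via Menger's theorem, realize each as a journey after any time $j$ using the recurrence of edges, and transfer the resulting cut bound to $\mathcal{G}$ since $\bar{E} \subseteq E$. Your write-up is somewhat more explicit than the paper's (notably the greedy time-selection and the monotonicity of $DynMinCut$ under the subgraph relation), but no new idea is involved.
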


\begin{proof}
    Consider a $k$-connected graph $\mathbb{G}' = (V, \bar{E})$ where $p_a$ and $p_b$ are two of its nodes. 
    It is known~\cite{Diestel_2017} that it is possible to identify a set of paths $p_a \rightarrow_k p_b$ between $p_a,p_b$ in $\mathbb{G}'$ such that its minimum cut is at least $k$ 
    (namely, there exists a set of paths $p_a \rightarrow_k p_b$ between $p_a,p_b$ in $\mathbb{G}'$ such that it is not possible to identify a subset $S \subset V \setminus \{p_a, p_b\}$ of size $k-1$ in which each path in $p_a \rightarrow_k p_b$ shares at least one node with $S$).
    If $\mathbb{G}'$ is the underlying graph of an evolving graph $\mathcal{G}'$ of class $\mathcal{E}^{\mathcal{R}}$ (thus $\mathcal{G}' \in \mathcal{E}^{\mathcal{R}}_k$) , then there always exists a set of journeys $p_a \rightsquigarrow_k p_b$ traversing the paths $p_a \rightarrow_k p_b$, because every edge re-appears infinitely often. 
    It follows that $\mathcal{G}' \in \mathcal{TC}^\mathcal{R}_k$ and the claim follows for any temporal graph $\mathcal{G}$ having as underlying graph $\mathbb{G} = (V, E \supseteq \bar{E})$. 
\end{proof}

\begin{corollary}[RC in $\langle PL, NC, AL \rangle$ - recurrent edges]
    \label{rm:anytoanyAllSync2}
    The reliable communication problem can be solved starting at any time $j$, in the \textit{perfect authenticated links} and \textit{synchronous computation} setting, if there exists a spatial subgraph $\mathcal{G}' := \mathcal{G}[{V}, \bar{E}]$ in $\mathcal{G}$ such that $\mathcal{G}' \in \mathcal{E}^{\mathcal{R}}_k$ and $k>2f$.   
\end{corollary}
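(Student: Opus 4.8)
The plan is to observe that this corollary is an immediate composition of the two preceding results, Theorem~\ref{th:anytoanyrecurrent} and Theorem~\ref{rm:anytoanyAllSync}, so no new protocol or analysis is required; the task is simply to verify that the hypotheses chain together correctly. In particular, the corollary is a sufficient-condition statement (an ``if'', not an ``if and only if''), which matches the fact that passing through a spatial subgraph can only lose information about $\mathcal{G}$, not gain it.

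First I would unpack the hypothesis: there exists a spatial subgraph $\mathcal{G}' := \mathcal{G}[V, \bar{E}]$ with $\mathcal{G}' \in \mathcal{E}^{\mathcal{R}}_k$ for some $k > 2f$. Because the definition of $\mathcal{E}^{\mathcal{R}}_k$ (Definition~\ref{def:recedges}) already carries the infinite-lifetime requirement and fixes the full vertex set $V$, the subgraph $\mathcal{G}'$ has exactly the shape $\mathcal{G}[V,\bar{E}]$ demanded by Theorem~\ref{th:anytoanyrecurrent}. Applying that theorem directly yields $\mathcal{G} \in \mathcal{TC}^\mathcal{R}_k$, with the \emph{same} parameter $k$.

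Next I would feed this membership into the sufficiency direction of Theorem~\ref{rm:anytoanyAllSync}: since $\mathcal{G} \in \mathcal{TC}^\mathcal{R}_k$ and $k > 2f$, that theorem guarantees that the any-to-any reliable communication problem is solvable starting at any time $j$ in the $\langle PL, NC, AL \rangle$ setting. This is precisely the conclusion of the corollary, so the argument closes.

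There is essentially no genuine obstacle, as the whole content lives in the two theorems already established. The only point meriting care is the bookkeeping on the parameter $k$: Theorem~\ref{th:anytoanyrecurrent} preserves the connectivity parameter verbatim when moving from $\mathcal{E}^{\mathcal{R}}_k$ to $\mathcal{TC}^\mathcal{R}_k$, so the strict inequality $k > 2f$ assumed on the recurrent-edges side transfers unchanged to the hypothesis consumed by Theorem~\ref{rm:anytoanyAllSync}. Hence the cut parameter is carried through without loss, and the corollary follows.
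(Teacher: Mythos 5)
Your proposal is correct and follows exactly the paper's own route: apply Theorem~\ref{th:anytoanyrecurrent} to obtain $\mathcal{G} \in \mathcal{TC}^\mathcal{R}_k$ from the existence of a spatial subgraph in $\mathcal{E}^{\mathcal{R}}_k$, then invoke the sufficiency direction of Theorem~\ref{rm:anytoanyAllSync} with $k>2f$. Your extra remarks on the preservation of the parameter $k$ and on the one-directional nature of the claim are sound and merely make explicit what the paper leaves implicit.
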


\begin{proof}
    It follows from Theorems \ref{rm:anytoanyAllSync} and \ref{th:anytoanyrecurrent}. 
    Theorem \ref{rm:anytoanyAllSync} characterizes class $\mathcal{TC}^\mathcal{R}_k$ as the one where reliable communication is solvable at any time when $k>2f$; Theorem \ref{th:anytoanyrecurrent} identifies $\mathcal{E}^{\mathcal{R}}_k$ class as a subclass of $\mathcal{TC}^\mathcal{R}_k$. 
\end{proof}

Relations between recurrent reachability/connectivity and recurrent edges classes exist in both directions, as shown in the following.

\begin{theorem}
    \label{th:new1}
    If an evolving graph $\mathcal{G}$ is in $\mathcal{J}^\mathcal{R}_{(s,t)}$ then there exists a spatial subgraph $\mathcal{G}' := \mathcal{G}[\bar{V}, \bar{E}]$ of class $\mathcal{E}^{\mathcal{R}}$ having a path between $p_s$ and $p_t$ in its underlying graph $\bar{\mathbb{G}}=(\bar{V},\bar{E})$.
\end{theorem}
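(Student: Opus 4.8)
The plan is to exploit the fact that recurrent reachability forces journeys from $p_s$ to $p_t$ to occur at arbitrarily late times, and then to extract a single underlying path whose edges must therefore recur infinitely often, taking that path as the desired spatial subgraph.

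First I would unpack the hypothesis: $\mathcal{G} \in \mathcal{J}^\mathcal{R}_{(s,t)}$ means that for every $j \in \mathcal{T}$ there is a journey $J_j$ from $p_s$ to $p_t$ in $\mathcal{G}_{[j,*]}$, so in particular every time label of $J_j$ is at least $j$, and $\mathcal{T}$ is infinite. Each $J_j$ projects onto a simple path $P_j$ in the underlying graph $\mathbb{G}$ (its node sequence consists of distinct nodes by the definition of journey). Since $|V|=n$ is finite, there are only finitely many simple paths from $p_s$ to $p_t$ in $\mathbb{G}$.

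The key step is a pigeonhole argument. The map $j \mapsto P_j$ sends the infinite index set $\mathcal{T}$ into this finite set of paths, so some fixed path $P^* = (p_s, p_1, \dots, p_t)$ satisfies $P_j = P^*$ for infinitely many $j$. The set $\{j \in \mathcal{T} : P_j = P^*\}$ is then an infinite subset of $\mathbb{N}$, hence unbounded. Fixing any edge $e$ of $P^*$, for every such $j$ the journey $J_j$ traverses $e$ at a time that is at least $j$; as $j$ ranges over this unbounded set, $e$ is thus present at arbitrarily large times, i.e.\ infinitely often. Hence every edge of $P^*$ appears infinitely often.

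To conclude, I would set $\bar{V}$ to be the node set of $P^*$, let $\bar{E}$ be its edge set, and define $\mathcal{G}' := \mathcal{G}[\bar{V}, \bar{E}]$. By the previous step every edge of $\bar{E}$ recurs infinitely often, so $\mathcal{G}' \in \mathcal{E}^{\mathcal{R}}$, while $P^*$ is by construction a path between $p_s$ and $p_t$ in the underlying graph $\bar{\mathbb{G}} = (\bar{V}, \bar{E})$. I expect the main subtlety to lie in the pigeonhole step: the point is not merely that some path recurs, but that its occurrences sit at unbounded times, which is what upgrades ``the path recurs'' to ``each of its edges appears infinitely often.'' The monotonicity of the time labels inside a journey — all labels of $J_j$ being at least $j$ — is precisely the property that makes this upgrade valid.
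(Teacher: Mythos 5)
Your proposal is correct and follows essentially the same route as the paper: both extract from the recurrent-reachability hypothesis an infinite family of $p_s\rightsquigarrow p_t$ journeys occurring at unboundedly late times, apply a pigeonhole argument over the finitely many possible underlying paths (the paper phrases this via the finiteness of the power set of the edge set), and conclude that the edges of some recurring path appear infinitely often, yielding the desired spatial subgraph of class $\mathcal{E}^{\mathcal{R}}$. Your explicit observation that all time labels of the journey chosen in $\mathcal{G}_{[j,*]}$ are at least $j$ is exactly the monotonicity the paper obtains by restarting each successive journey after the previous one ends.
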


\begin{proof}
    The $\mathcal{J}^\mathcal{R}_{(s,t)}$ class guarantees the existence of a journey from node $p_s$ to node $p_t$ in all temporal subgraphs $\mathcal{G}_{[j,*]}$ of $\mathcal{G}$.

    Let $J_1 = (A_1,B_1)$ be a journey $(p_s \rightsquigarrow p_t) \in \mathcal{G}$, and let $x_{1}$ and $y_{1}$ be respectively the lowest and highest time instant associated to an edge in $J_1$, namely the first the and last elements of $B_1$.
    Let $J_2 = (A_2,B_2)$ be a journey $(p_s \rightsquigarrow p_t) \in \mathcal{G}_{[y_1+1,*]}$, and let $x_{2}$ and $y_{2}$ be respectively the lowest and highest time instant associated to an edge in $J_2$.
    The power set of a set $R$ is the set of all subsets of $R$. 
    The power set of a set that has a finite number of elements has a finite number of elements as well.
    It follows that, continuing the reasoning above, and thus identifying edge set $A_1, A_2, \dots$ there is at least one edge set $A_j$ that occurs infinitely often, and the claim follows.
\end{proof}

\begin{corollary}
    \label{th:new2}
    If an evolving graph $\mathcal{G}$ is in $\mathcal{J}_{(s,t,k)}^\mathcal{R}$ then there exists a spatial subgraph $\mathcal{G}' := \mathcal{G}[\bar{V}, \bar{E}]$ of class $\mathcal{E}^{\mathcal{R}}$ having a set of $k$ disjoint paths between $p_s$ and $p_t$ in its underlying graph $\bar{\mathbb{G}}=(\bar{V},\bar{E})$.
\end{corollary}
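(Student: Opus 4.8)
The plan is to follow the scheme of Theorem~\ref{th:new1}, but to upgrade single recurrent journeys to recurrent \emph{sets} of journeys of dynamic minimum cut at least $k$, and then to convert the resulting dynamic cut condition into a classical \emph{static} $k$-connectivity statement to which Menger's theorem~\cite{Menger1927} applies. First I would build an infinite sequence of journey sets with pairwise disjoint time windows. Since $\mathcal{G}\in\mathcal{J}_{(s,t,k)}^\mathcal{R}$, for every $j\in\mathcal{T}$ the subgraph $\mathcal{G}_{[j,*]}$ contains a set of journeys $Q$ whose node sets have minimum hitting set at least $k$. Starting from the least instant of $\mathcal{T}$, pick such a $Q_1$, let $y_1$ be the largest time instant used by a journey of $Q_1$, pick $Q_2$ inside $\mathcal{G}_{[y_1+1,*]}$ (still in $\mathcal{J}_{(s,t,k)}$ by the recurrence hypothesis and the infinite lifetime), and iterate. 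This yields $Q_1,Q_2,\dots$ occupying disjoint, unboundedly late windows, each of dynamic minimum cut at least $k$.

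Second, I would apply a pigeonhole argument one level higher than in Theorem~\ref{th:new1}. Each $Q_i$ determines a finite collection $\mathcal{A}_i$ of underlying paths of $\mathbb{G}$, namely the node sequences of its journeys, which are genuine paths since journeys visit distinct nodes. As there are only finitely many paths between $p_s$ and $p_t$ in the finite graph $\mathbb{G}$, there are finitely many such collections, so some collection $\mathcal{A}^\ast$ equals $\mathcal{A}_i$ for infinitely many indices $i$. Taking $\bar{E}$ to be the edges appearing in the paths of $\mathcal{A}^\ast$ and $\bar{V}$ their endpoints, every edge of $\bar{E}$ is traversed within each of the infinitely many disjoint and arbitrarily late windows of those indices, hence it appears infinitely often; therefore $\mathcal{G}':=\mathcal{G}[\bar{V},\bar{E}]\in\mathcal{E}^{\mathcal{R}}$.

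Finally, I would deduce the $k$ disjoint paths from the dynamic cut bound via Menger, working entirely in the static underlying graph $\bar{\mathbb{G}}=(\bar{V},\bar{E})$. Because $\mathcal{A}^\ast$ is one of the $\mathcal{A}_i$, its node sets have minimum hitting set at least $k$. Any internal $p_s$--$p_t$ vertex cut of $\bar{\mathbb{G}}$ must intersect every $p_s$--$p_t$ path of $\bar{\mathbb{G}}$, in particular every path of $\mathcal{A}^\ast$, so it is a hitting set for $\mathcal{A}^\ast$ and thus has size at least $k$. Hence $\bar{\mathbb{G}}$ has node connectivity at least $k$ between $p_s$ and $p_t$, and Menger's theorem yields $k$ internally node-disjoint paths between them, as required.

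I expect the main obstacle to be conceptual rather than computational: one must resist the temptation to extract $k$ node-disjoint \emph{journeys}, since the vertex version of Menger's theorem is known to fail in the temporal setting, so the dynamic minimum cut need not equal the number of disjoint journeys. The correct move is to pass to the static underlying graph $\bar{\mathbb{G}}$ of the recurrent subgraph, where classical Menger is valid, and to observe that the dynamic minimum cut of $\mathcal{A}^\ast$ lower-bounds the static vertex connectivity of $\bar{\mathbb{G}}$ precisely because every static $p_s$--$p_t$ cut is in particular a hitting set for the journey paths.
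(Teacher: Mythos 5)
Your proof is correct and follows essentially the same route as the paper: iteratively extract journey sets with dynamic minimum cut at least $k$ over disjoint, unboundedly late time windows, then pigeonhole on the finitely many possible edge/path collections to obtain one that recurs infinitely often, which yields the recurrent spatial subgraph. The only difference is that you spell out the final step --- converting the hitting-set bound on the recurring path collection into $k$ disjoint paths in the static underlying graph via Menger's theorem --- which the paper leaves implicit in ``the claim follows,'' and your remark about why one must not attempt a temporal Menger argument is a useful clarification.
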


\begin{proof}
    The claim follows for the same argument provided in Theorem \ref{th:new1}.

    The $\mathcal{J}_{(s,t,k)}^\mathcal{R}$ class guarantees the existence of a set of journeys from $p_s$ to $p_t$, having a dynamic minimum cut of size at least $k$, for all temporal subgraphs $\mathcal{G}_{[j,*]}$ of $\mathcal{G}$, namely $\forall j \in \mathcal{T}, \exists (p_a \rightsquigarrow_k p_b) \in \mathcal{G}_{[j,*]}$.

    Let $P_1$ be a set of journeys $(p_a \rightsquigarrow_k p_b) \in \mathcal{G}$, let $x_{1}$ and $y_{1}$ be respectively the lowest and highest time instant associated to an edge of a journey in $P_1$, and let $E_1 \subseteq E$ be the set of edges in $P_1$. 
    Let $P_2$ be a set of journeys $(p_a \rightsquigarrow_k p_b) \in \mathcal{G}_{[y_1+1,*]}$, let $x_{2}$ and $y_{2}$ be respectively the lowest and highest time instant associated to an edge of a journey in $P_2$, and let $E_2\subseteq E$ be the set of edges in $P_2$.
    The power set of a set that has a finite number of elements has a finite number of elements as well.
    It follows that, continuing the reasoning above, and thus identifying edge set $E_1, E_2, \dots$, there is at least one edge set $E_j$ that occurs infinitely often, and the claim follows.
\end{proof}

\begin{corollary}
    \label{th:new3}
    Let $\mathcal{G}$ be an evolving graph with infinite lifetime. 
    If $\mathcal{G}$ is in $\mathcal{TC}^\mathcal{R}_k$ then there exists a spatial subgraph $\mathcal{G}' := \mathcal{G}[V, \bar{E}]$ of class $\mathcal{E}^{\mathcal{R}}_k$
\end{corollary}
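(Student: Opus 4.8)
The plan is to reduce the statement to a per-pair application of Corollary~\ref{th:new2} and then to glue the resulting spatial subgraphs together by taking the union of their edge sets. First I would unfold the definitions. Membership $\mathcal{G} \in \mathcal{TC}^\mathcal{R}_k$ means that for every time $j$ the subgraph $\mathcal{G}_{[j,*]}$ lies in $\mathcal{TC}_k$, i.e.\ that for every ordered pair $(p_s, p_t) \in V \times V$ there is a set of journeys $(p_s \rightsquigarrow_k p_t) \in \mathcal{G}_{[j,*]}$. Swapping the two universal quantifiers (over $j$ and over pairs), this is precisely the assertion that $\mathcal{G} \in \mathcal{J}_{(s,t,k)}^\mathcal{R}$ for every pair $(p_s, p_t)$. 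Hence Corollary~\ref{th:new2} is applicable to each pair separately.

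For each ordered pair $(p_s, p_t)$, Corollary~\ref{th:new2} yields a spatial subgraph $\mathcal{G}[\bar{V}_{s,t}, \bar{E}_{s,t}]$ of class $\mathcal{E}^{\mathcal{R}}$ whose underlying graph contains $k$ internally disjoint paths between $p_s$ and $p_t$. I would then set $\bar{E} := \bigcup_{(p_s,p_t)} \bar{E}_{s,t}$ and consider the candidate spatial subgraph $\mathcal{G}[V, \bar{E}]$. Because $V$ is finite, this is a finite union. The recurrent-edges half of $\mathcal{E}^{\mathcal{R}}_k$ membership is then immediate: every edge in each $\bar{E}_{s,t}$ is present infinitely often (that is exactly what class $\mathcal{E}^{\mathcal{R}}$ provides), and this property is preserved under union, so every edge of $\bar{E}$ reappears infinitely often.

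It remains to establish that the underlying graph $\bar{\mathbb{G}} = (V, \bar{E})$ is $k$-connected. For each pair $(p_s, p_t)$, the $k$ disjoint paths furnished by Corollary~\ref{th:new2} use only edges of $\bar{E}_{s,t} \subseteq \bar{E}$ and nodes of $\bar{V}_{s,t} \subseteq V$, so they persist as $k$ internally disjoint paths from $p_s$ to $p_t$ in $\bar{\mathbb{G}}$. Having $k$ internally disjoint paths between every pair of vertices, I would invoke the global (Menger) characterization of $k$-connectivity~\cite{Menger1927,Diestel_2017} to conclude that $\bar{\mathbb{G}}$ is $k$-connected. Together with the recurrence of its edges, this places $\mathcal{G}[V, \bar{E}]$ in $\mathcal{E}^{\mathcal{R}}_k$, as required.

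The main obstacle is the $k$-connectivity step, and specifically the direction of Menger's theorem it relies on: the excerpt only records explicitly that $k$-connectivity implies the existence of $k$ disjoint paths, whereas here I need the converse (global) form, that $k$ internally disjoint paths between all pairs imply $k$-connectivity. I would also verify the mild non-degeneracy hypotheses ($|V| \geq k+1$, $\bar{\mathbb{G}}$ having at least two vertices) under which the global statement is phrased, both of which hold in the intended regime. The union step itself is benign precisely because $V$ is finite, so no care is needed about combining infinitely many subgraphs.
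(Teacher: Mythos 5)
Your proof is correct and follows essentially the same route as the paper, which simply invokes the argument of Corollary~\ref{th:new2} for every pair of nodes; you make explicit the union of the per-pair recurrent edge sets and the resulting $k$-connectivity of the underlying graph, details the paper leaves implicit. The Menger direction you worry about is actually the easy one: if every pair has $k$ internally disjoint paths, any set of $k-1$ removed vertices meets the interior of at most $k-1$ of them, so one path survives and the graph stays connected.
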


\begin{proof}
    The claim follows for the same argument provided in Corollary \ref{th:new2}, considering any pair of nodes $p_s,p_t$.
\end{proof}

We prove that the classes $\mathcal{C}^{*}_k$ and $\mathcal{TC}^\mathcal{R}_k$ are also related. Specifically, $\mathcal{C}^{*}_k$ is a sub-class of $\mathcal{TC}^\mathcal{R}_k$ if $\mathcal{G}$ has infinite lifetime, and we accordingly extend previous solvability results on the reliable communication problem.

{\begin{theorem}
	\label{th:kinterval}
	Given an evolving graph $\mathcal{G}$ with infinite lifetime, if $\mathcal{G} \in \mathcal{C}^{*}_k$ then $\mathcal{G}_{[j,{j +n-k}]} \in \mathcal{TC}_k$ for any $j \in \mathcal{T}$, and thus $\mathcal{G} \in \mathcal{TC}^\mathcal{R}_k$.
\end{theorem}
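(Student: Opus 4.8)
The plan is to characterize the cut directly through its definition as a minimum hitting set: I will show that for an arbitrary pair $p_s,p_t$ and an arbitrary candidate cut $S\subseteq V\setminus\{p_s,p_t\}$ of size $k-1$, at least one journey from $p_s$ to $p_t$ survives inside the window $[j,j+n-k]$ after $S$ is removed. Since $DynMinCut(\mathcal{G}_{[j,j+n-k]},p_s,p_t)$ is by definition the size of a minimum hitting set of $\Sigma(\mathcal{G}_{[j,j+n-k]},p_s,p_t)$, exhibiting such a surviving journey for every $S$ with $|S|=k-1$ proves that no set of $k-1$ nodes hits all journeys, hence $DynMinCut\ge k$. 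As $p_s,p_t$ and $j$ are arbitrary, this yields $\mathcal{G}_{[j,j+n-k]}\in\mathcal{TC}_k$.

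First I would fix $j$, a pair $p_s,p_t$, and an arbitrary $S\subseteq V\setminus\{p_s,p_t\}$ with $|S|=k-1$, and reduce the problem to the always-connected setting. Because $\mathcal{G}\in\mathcal{CK}^{*}_k$, every snapshot $G_l$ is $k$-connected, so by the very definition of $k$-connectivity the removal of the $k-1$ nodes of $S$ leaves each snapshot $G_l[V\setminus S]$ connected. Consequently the spatial temporal subgraph $\mathcal{G}[V\setminus S]_{[j,j+n-k]}$, defined on the reduced vertex set of size $n'=n-(k-1)=n-k+1$, has every one of its snapshots connected over this interval, i.e. it lies in $\mathcal{C^*}$ restricted to $[j,j+n-k]$.

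Next I would invoke the classical $1$-interval-connectivity flooding bound already cited for $\mathcal{C^*}$\cite{DBLP:conf/stoc/KuhnLO10}: when every snapshot of an $n'$-vertex temporal graph is connected, a message flooded from any vertex reaches all others within $n'-1$ transmission rounds, since at each round a connected snapshot must contain an edge crossing the informed/uninformed boundary, so the informed set strictly grows until it covers all $n'$ vertices. Flooding $p_s$ from time $j$, every vertex, and in particular $p_t$, is informed after at most $n'-1=n-k$ rounds, using edges at times in $\{j,\dots,j+n-k-1\}\subseteq[j,j+n-k]$. To turn reception into a genuine journey I would trace $p_t$ backwards to $p_s$ through the node that first delivered the message at each hop; this produces a node sequence with strictly decreasing ``first-informed'' rounds, so the associated edge times are strictly increasing and the nodes are pairwise distinct, giving a valid journey $p_s\rightsquigarrow p_t$ inside $\mathcal{G}[V\setminus S]_{[j,j+n-k]}$ that avoids $S$ by construction.

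Since such a surviving journey exists for every $S$ of size $k-1$, no $(k-1)$-element set is a hitting set for the journeys between $p_s$ and $p_t$, whence $DynMinCut(\mathcal{G}_{[j,j+n-k]},p_s,p_t)\ge k$, i.e. $(p_s\rightsquigarrow_k p_t)\in\mathcal{G}_{[j,j+n-k]}$; universally quantifying over pairs and over $j$ gives $\mathcal{G}_{[j,j+n-k]}\in\mathcal{TC}_k$. I expect the main obstacle to be the clean extraction of a valid journey from the flooding process rather than the connectivity reduction itself: one must verify that the back-traced hops occur at \emph{strictly} increasing times, which relies on the monotonicity of the first-informed rounds along the trace, together with the interval bookkeeping that the reduced count $n'=n-k+1$ gives exactly an $n-k$-round bound fitting within the stated window $[j,j+n-k]$.
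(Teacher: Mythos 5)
Your proposal is correct and follows essentially the same route as the paper's own proof: remove an arbitrary $(k-1)$-set $S$, observe that $k$-connectivity of every snapshot makes $\mathcal{G}[V\setminus S]_{[j,j+n-k]}$ 1-interval connected on $n-k+1$ vertices, apply the flooding bound of Kuhn, Lynch, and Oshman to obtain a journey within $n-k$ instants, and conclude $DynMinCut \ge k$ for every pair. The only difference is that you spell out the extraction of a valid journey from the flooding process (strictly increasing first-informed rounds), a step the paper leaves implicit by directly citing the cited traversal result.
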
}

{
\begin{proof}    
    Consider an evolving graph $\mathcal{G} \in \mathcal{C}^{*}_k$ with $V$ as vertex set, $|V| = n$, and a pair of its nodes $p_s,p_t \in V$.

    Let $\Pi_{(s,t)}$ be the set of all the journeys from $p_s$ to $p_t$ in the temporal subgraph $\mathcal{G}_{[j,j +n-k]}$.
    Note that every temporal subgraph $\mathcal{G}_{\bar{T}}$ of an evolving graph $\mathcal{G}$ in $\mathcal{CK}^{*}_k$ is by definition also in  $\mathcal{CK}^{*}_k$.

    Let $S$ be a subset of $k-1$ nodes of $V$ not containing $p_s$ and $p_t$, namely $S \subset V \setminus \{p_s,p_t\}$, $|S| = k-1$, and 
    let $\mathcal{G}'$ be the spatial temporal subgraph of $\mathcal{G}$ such that $\mathcal{G}' := \mathcal{G}[V \setminus S]_{[j,{j+n-k}]}$.
    The evolving graph $\mathcal{G}'$ is 1-interval connected by construction because at least $k$ nodes must be removed from $\mathcal{G}$ to disconnect any of its snapshots. 

    Let $\Pi[V \setminus S]_{(s,t)}$ be the set of all the journeys from $p_s$ to $p_t$ in $\mathcal{G}'$.
    It has been proven in~\cite{DBLP:conf/stoc/KuhnLO10} that $n'-1$ instants, where $n'$ is the number of nodes in an evolving graph, are sufficient to traverse a journey between any two nodes in a 1-interval connected graph (class $\mathcal{C^*}$);
    the evolving graph $\mathcal{G}'$ is composed of $n-(k-1)$ nodes, thus a journey from $p_s$ to $p_t$ can always be traversed in at most $n-(k-1)-1 = n -k$ instants, and thus $\Pi[V \setminus S]_{(s,t)} \neq \emptyset$.
    
    It follows that there exists a set of journeys with a dynamic minimum cut size at least equal to $k$ in $\mathcal{G}_{[j,{j+n-k}]}$ from $p_s$ to $p_t$ 
    because there exists at least one journey in $\Pi_{(s,t)}$ from $p_s$ to $p_t$ when removing any subset $S \subset V \setminus \{p_s, p_t\}$ of $k-1$ nodes. 

    Given that, for any $j \in \mathcal{T}$,  $\mathcal{G}_{[j,{j +n-k}]} \in \mathcal{TC}_k$, it follows that $\mathcal{G} \in \mathcal{TC}^\mathcal{R}_k$ by Definition \ref{def:rktc}.
\end{proof}
}

\begin{corollary}[RC in $\langle PL, SC, AL \rangle$ - 1-interval]
    \label{cor:anytoany2}
    The reliable communication problem can be solved starting at any time $j$, in the \textit{perfect authenticated links} and \textit{negligible computation} setting, if $\mathcal{G} \in \mathcal{C}^{*}_k$ and $k>2f$. Furthermore, it can be solved in $n-k$ time.
\end{corollary}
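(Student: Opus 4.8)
The plan is to obtain the corollary by chaining Theorem~\ref{th:kinterval} with the exact characterization of Theorem~\ref{rm:anytoanyAllSync}, the latter being the statement that any-to-any reliable communication in $\langle PL, NC, AL \rangle$ is solvable at any time precisely on $\mathcal{TC}^\mathcal{R}_k$ with $k > 2f$. The entire argument therefore reduces to showing that an evolving graph $\mathcal{G} \in \mathcal{CK}^{*}_k$ (with infinite lifetime, which is implicit in solving at \emph{any} starting time and is inherited from the hypothesis of Theorem~\ref{th:kinterval}) belongs to $\mathcal{TC}^\mathcal{R}_k$, and then reading off the latency from the window length produced by Theorem~\ref{th:kinterval}.

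First I would invoke Theorem~\ref{th:kinterval} to get that, for every $j \in \mathcal{T}$, the finite window $\mathcal{G}_{[j,j+n-k]}$ already lies in $\mathcal{TC}_k$. The one small step that requires justification is a monotonicity observation: a collection of journeys realizing a dynamic minimum cut of at least $k$ inside the interval $[j,j+n-k]$ is, verbatim, still a valid collection of journeys in the larger temporal subgraph $\mathcal{G}_{[j,*]}$, because enlarging the lifetime only adds candidate edges and never deletes the journeys already present. Hence $\mathcal{G}_{[j,j+n-k]} \in \mathcal{TC}_k$ implies $\mathcal{G}_{[j,*]} \in \mathcal{TC}_k$ for every $j$. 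Since this holds at every starting instant, Definition~\ref{def:rktc} yields $\mathcal{G} \in \mathcal{TC}^\mathcal{R}_k$, so $\mathcal{CK}^{*}_k$ (restricted to infinite lifetimes) is a sub-class of $\mathcal{TC}^\mathcal{R}_k$. Applying Theorem~\ref{rm:anytoanyAllSync} with the hypothesis $k > 2f$ then immediately delivers solvability at any starting time.

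For the latency bound, the plan is to reuse the constructive delivery argument from the proof of Lemma~\ref{lm:maunec} together with the \emph{explicit} window length from Theorem~\ref{th:kinterval}. Because the set of journeys witnessing $\mathcal{G}_{[j,j+n-k]} \in \mathcal{TC}_k$ has dynamic minimum cut at least $k \ge 2f+1$ and lies entirely within $[j,j+n-k]$, each such journey is fully traversed, as a message propagation, by time $j+n-k$. The at most $f$ Byzantine processes can only suppress the tuples routed through them; by the definition of dynamic minimum cut, discarding these still leaves the genuine tuples received by $p_t$ with a minimum hitting set exceeding $f$. Rule~3 of Algorithm~\ref{alg:maurer} then triggers \texttt{RC.deliver} no later than $j+n-k$, so the instance terminates within $n-k$ time.

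The main obstacle I expect is the latency claim rather than the class containment: one must confirm that the journeys guaranteed by Theorem~\ref{th:kinterval} are realizable as actual message propagations that all \emph{complete} within the single window $[j,j+n-k]$, aligning the per-instant \emph{compute}/\emph{send}/\emph{receive} semantics of the system model with the discrete time indices of the journeys, and that Byzantine suppression of $f$ of them does not push delivery past the window. The class-membership half is essentially the monotonicity remark above and should be routine.
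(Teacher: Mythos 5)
Your proposal is correct and follows essentially the same route as the paper: chaining Theorem~\ref{th:kinterval} with Theorem~\ref{rm:anytoanyAllSync} for solvability, and reading the $n-k$ latency off the window $\mathcal{G}_{[j,j+n-k]} \in \mathcal{TC}_k$. You merely make explicit two steps the paper leaves implicit (the monotonicity from the finite window to $\mathcal{G}_{[j,*]}$, and the alignment of the windowed journeys with the delivery argument of Lemma~\ref{lm:maunec}), which is a welcome but not substantively different elaboration.
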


\begin{proof}
    It follows from Theorems \ref{th:anytoanysync} and \ref{th:kinterval}.
    Theorem \ref{th:anytoanysync} characterizes class $\mathcal{TC}_k$ as the one where reliable communication is solvable when $k>2f$; 
    Theorem \ref{th:kinterval} identifies $\mathcal{C}^{*}_k$ class as a subclass of $\mathcal{TC}^\mathcal{R}_k$ where all of its temporal subgraphs with lifetime $n-k$ are in $\mathcal{TC}_k$. 
    The upper bound on the latency follows from the fact that $\forall j \in \mathcal{T}, \mathcal{G}_{[j,j+n-k]} \in \mathcal{TC}_k$, thus $p_s \rightsquigarrow_k p_t$ are traversable in $n-k$ times for whatever pair of $p_s$ and $p_t$.
\end{proof}

For the sake of completeness, we state a relation that exists between evolving graphs classes $\mathcal{C}^{*}_k$ and $\mathcal{E}^{\mathcal{R}}_k$.

\begin{theorem}
    \label{th:1intToRec}
    Let $\mathcal{G}$ be an evolving graph with infinite lifetime. If $\mathcal{G} \in \mathcal{C}^{*}_k$ (1-interval k-connectivity) then there exists a spatial subgraph $\mathcal{G}[V, E^\mathcal{R}]$ of class $\mathcal{E}^{\mathcal{R}}_k$ (recurrent edges k-connected).
\end{theorem}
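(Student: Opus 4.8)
The plan is to construct the required spatial subgraph explicitly by taking $E^\mathcal{R}$ to be precisely the set of edges that are present infinitely often in $\mathcal{G}$, and then to verify the two defining conditions of class $\mathcal{E}^{\mathcal{R}}_k$: that every edge of the subgraph recurs (immediate by construction) and that its underlying graph $(V, E^\mathcal{R})$ is $k$-connected (the substantive point). Working with the recurrent-edge set directly avoids having to isolate a single repeated snapshot.

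First I would exploit the finiteness of the edge universe. Since $|V| = n$ is fixed, the set $E$ of possible edges satisfies $|E| \le \binom{n}{2} < \infty$, so I can partition $E$ into the recurrent edges $E^\mathcal{R}$ (those $e$ with $e \in E_l$ for infinitely many $l$) and the remaining set $E^{fin} = E \setminus E^\mathcal{R}$ of edges present only finitely often. Because $E^{fin}$ is finite and each of its edges occurs only finitely often, there is a time $T$ after which none of them ever reappears; hence for every $j > T$ the snapshot satisfies $E_j \subseteq E^\mathcal{R}$.

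Next I would transfer the connectivity from a late snapshot to the underlying graph. Fix any $j > T$. By the hypothesis $\mathcal{G} \in \mathcal{CK}^{*}_k$, the snapshot $G_j = (V, E_j)$ is $k$-connected, and by the previous step $E_j \subseteq E^\mathcal{R}$. Since adding edges to a graph on a fixed vertex set cannot decrease its node connectivity, the graph $(V, E^\mathcal{R}) \supseteq G_j$ is $k$-connected as well. This is the crux of the argument and the only step requiring more than bookkeeping: it rests on the monotonicity of $k$-connectivity under edge addition, together with the observation that all sufficiently late snapshots are spanning subgraphs of $(V, E^\mathcal{R})$.

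Finally I would assemble the conclusion. The spatial subgraph $\mathcal{G}[V, E^\mathcal{R}]$ has underlying graph $(V, E^\mathcal{R})$ — each edge of $E^\mathcal{R}$ appears in some snapshot, so the union of its snapshot edge sets is exactly $E^\mathcal{R}$ — which we have just shown to be $k$-connected; moreover every edge of $E^\mathcal{R}$ is, by definition, present in infinitely many snapshots, so it recurs after any given time. Both conditions in the definition of class $\mathcal{E}^{\mathcal{R}}_k$ therefore hold, giving $\mathcal{G}[V, E^\mathcal{R}] \in \mathcal{E}^{\mathcal{R}}_k$, as claimed. I expect no genuine obstacle beyond correctly invoking the finiteness of the edge set to produce the threshold $T$; the remaining reasoning is a direct consequence of edge-addition monotonicity of connectivity.
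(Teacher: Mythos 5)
Your proposal is correct and follows essentially the same route as the paper: partition the edges into recurrent and finitely-occurring ones, pick a threshold time after which only recurrent edges appear, and transfer the $k$-connectivity of a late snapshot to the underlying graph $(V, E^\mathcal{R})$. Your explicit appeal to monotonicity of node connectivity under edge addition merely spells out a step the paper leaves implicit.
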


\begin{proof}
    If an evolving graph $\mathcal{G}$ is 1-interval $k$-connected and has an infinite lifetime, then a subset of its edges  $E^\mathcal{R} \subseteq E$ must be present within an infinite number of snapshots. 
    Indeed, the number of nodes in $\mathcal{G}$ is finite and equals to $n$, and the number of possible edges is finite as well (at most $n^2$). It follows that some edges in $E$ must re-appear infinitely often. 
    We prove that if $\mathcal{G}$ is 1-interval $k$-connected then the set of edges $E^\mathcal{R}$ that re-appears infinitely often forms a $k$-connected graph $\mathbb{G}' = (V, E^\mathcal{R})$.
    
    Let us partition the edges of $E$ in $\mathcal{G}$ in two sets: $E^\mathcal{R}$ containing all the edges that re-appear infinitely often and $\tilde{E}$ that are present a finite number of times in $\mathcal{G}$.
    Let $t_z$ be the time when the last appearance of an edge in $\tilde{E}$ occurs in $\mathcal{G}$. It follows that starting at time $t_{z+1}$ all edges in $\mathcal{G}$ must appear infinitely often. The 1-interval $k$-connectivity property of $\mathcal{G}$ requires that all the edges $E^\mathcal{R}$ must form a $k$-connected graph, and the claim follows.
\end{proof}



Finally, we study the solvability conditions of the reliable communication problem while relaxing the assumptions of perfect links and synchronous computation.

\begin{theorem}[One-to-one RC in $\langle FLL / AC, AL \rangle$]
    \label{th:async10}
    Given a setting where either links are fair-loss, or local computation is asynchronous, or both, the one-to-one reliable communication problem can be solved starting at any time $j$ if and only if $\mathcal{G} \in \mathcal{J}_{(s,t,k)}^\mathcal{R}$ and $k>2f$.
\end{theorem}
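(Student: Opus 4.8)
The plan is to prove the two directions separately, reusing the synchronous characterization (Theorem~\ref{rm:oneToOneAllSync}) for necessity, and the recurrent-edge reformulation (Corollary~\ref{th:new2}) together with the retransmission rule of Algorithm~\ref{alg:edit} for sufficiency. Throughout, the guiding idea is that weakening the link/computation model cannot make the problem easier, while the recurrent structure guaranteed by the class hypothesis lets an \emph{eventual} (rather than timely) propagation argument replace the fixed-timestamp journey argument used in the perfect-link case.

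For necessity, I would observe that $\langle PL, NC\rangle$ is the most favorable instance of the models under consideration: a perfect link is a fair-loss link that happens never to drop a message, and negligible computation is the degenerate case of finite (significant) computation. Consequently every execution admissible in $\langle PL, NC, AL\rangle$ is also admissible in $\langle FLL/SC, AL\rangle$, so any protocol solving one-to-one reliable communication starting at every time $j$ in the weaker setting also solves it in $\langle PL, NC, AL\rangle$. Theorem~\ref{rm:oneToOneAllSync} then forces $\mathcal{G}\in\mathcal{J}_{(s,t,2f+1)}^\mathcal{R}$, i.e. $\mathcal{G}\in\mathcal{J}_{(s,t,k)}^\mathcal{R}$ with $k>2f$. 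Equivalently, one may note that the indistinguishability argument underlying Lemma~\ref{lm:mausuff} is insensitive to message loss and computation delay, and that demanding solvability at \emph{every} starting time $j$ yields the recurrent condition directly.

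For sufficiency, I would argue that Algorithm~\ref{alg:edit} still meets both properties. Safety is inherited verbatim from Lemma~\ref{lm:maunec}: the argument uses only authenticated links and the monotone accumulation of sender identifiers in the field $S$, so every tuple injected by the Byzantine coalition carries at least one Byzantine identifier, the minimum hitting set of spurious tuples never exceeds $f$, and no forged content is delivered; this reasoning refers neither to timing nor to loss. The heart of the proof is liveness. Here I would invoke Corollary~\ref{th:new2}: from $\mathcal{G}\in\mathcal{J}_{(s,t,k)}^\mathcal{R}$ one obtains a spatial subgraph of class $\mathcal{E}^\mathcal{R}$ whose underlying graph contains $k$ node-disjoint paths $P_1,\dots,P_k$ between $p_s$ and $p_t$, every edge of which recurs infinitely often. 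Because the modified rule makes each correct process multicast its monotonically growing set $\Omega_i$ at \emph{every} time instant, any stored tuple is offered over each recurrent incident edge infinitely often; combined with the fair-loss guarantee (infinitely many transmissions force infinitely many receptions), an induction along $P_i$ shows that whenever $P_i$ consists solely of correct processes, $p_t$ eventually stores $(s,c,P_i\setminus\{t\})$. Significant computation inserts only a finite, unknown delay per hop, so the induction still terminates after the finitely many hops of $P_i$.

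To close liveness I would count surviving paths: at most $f$ of the $k$ disjoint paths contain a Byzantine node, hence at least $k-f>f$ are entirely correct and deliver their tuples to $p_t$; the node sets of distinct disjoint paths are pairwise disjoint, so their minimum hitting set has size at least the number of such paths, which exceeds $f$. The delivery rule then fires and $p_t$ executes \texttt{RC.deliver($s$, $c$)}. I expect this liveness step to be the main obstacle: unlike the perfect-link case, no single transmission is guaranteed, so the proof cannot track a fixed journey with prescribed timestamps and must instead lean on the recurrent-edge reformulation and on unbounded-but-finite delays to establish eventual propagation across each disjoint path.
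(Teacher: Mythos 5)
Your proposal is correct and follows essentially the same route as the paper: necessity by reducing to the $\langle PL, NC, AL\rangle$ characterization of Theorem~\ref{rm:oneToOneAllSync} (since perfect links and negligible computation are special cases of fair-loss links and significant computation), and sufficiency by running Algorithm~\ref{alg:edit} along recurrent-edge paths obtained from Theorem~\ref{th:new1}/Corollary~\ref{th:new2}, with a fair-loss induction per hop and the usual Byzantine counting to trigger the delivery rule. Your write-up is somewhat more explicit than the paper's (notably on safety and on the disjoint-path hitting-set count), but the decomposition and key lemmas are the same.
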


\begin{proof}
    Provided in Appendix \ref{sec:results}. 
\end{proof}

\begin{theorem}[RC in $\langle FLL/AC, AL \rangle$]
    \label{th:async3}
    Given a setting where either links are fair-loss, or local computation is asynchronous, or both, the any-to-any reliable communication problem can be solved starting at any time $j$ if and only if $\mathcal{G} \in \mathcal{TC}^\mathcal{R}_k$ (Recurrent $k$-Temporal-Connectivity) and $k>2f$.
\end{theorem}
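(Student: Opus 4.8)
The plan is to mirror the way Theorem~\ref{rm:anytoanyAllSync} was derived from Theorem~\ref{rm:oneToOneAllSync}, but now building on the fair-loss/significant-computation one-to-one result of Theorem~\ref{th:async10} instead of the perfect-link one. The crucial observation I would establish first is purely definitional: an evolving graph $\mathcal{G}$ belongs to $\mathcal{TC}^\mathcal{R}_k$ if and only if $\mathcal{G} \in \mathcal{J}_{(s,t,k)}^\mathcal{R}$ for every ordered pair of nodes $(p_s,p_t)$. Unfolding Definition~\ref{def:rktc}, membership $\mathcal{G} \in \mathcal{TC}^\mathcal{R}_k$ means $\forall j\in\mathcal{T},\ \mathcal{G}_{[j,*]}\in\mathcal{TC}_k$, and by Definition~\ref{def:tempKconn} the latter means $\forall p_s,p_t,\ \exists (p_s\rightsquigarrow_k p_t)\in\mathcal{G}_{[j,*]}$; commuting the two universal quantifiers and invoking Definition~\ref{def:noname1} gives exactly $\forall p_s,p_t,\ \mathcal{G}\in\mathcal{J}_{(s,t,k)}^\mathcal{R}$. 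This reduces the any-to-any question to a family of one-to-one questions, one per ordered pair.

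For the sufficiency direction, I would take $\mathcal{G}\in\mathcal{TC}^\mathcal{R}_k$ with $k>2f$ and apply the equivalence above to obtain $\mathcal{G}\in\mathcal{J}_{(s,t,k)}^\mathcal{R}$ for each pair. Theorem~\ref{th:async10} then guarantees that Algorithm~\ref{alg:edit} solves one-to-one reliable communication between that pair starting at any time in the $\langle FLL/SC, AL\rangle$ setting. The key point I must argue is that a single run of Algorithm~\ref{alg:edit} simultaneously realises the any-to-any primitive: the algorithm never distinguishes a privileged source--target pair, but merely relays every tuple $(s,c,S)$ it holds and delivers whenever a hitting set of size exceeding $f$ appears. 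Hence the liveness and safety arguments of Theorem~\ref{th:async10}, which depend only on the recurrent $k$ disjoint paths guaranteed via Corollary~\ref{th:new2} for the relevant pair, apply verbatim and in parallel to every pair, so every correct target delivers every content sent by every correct source.

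For the necessity direction, I would argue through the same one-to-one reduction. If the any-to-any problem is solvable at any time, then in particular one-to-one reliable communication between every fixed pair $(p_s,p_t)$ is solvable at any time in the same setting; Theorem~\ref{th:async10} then forces $\mathcal{G}\in\mathcal{J}_{(s,t,k)}^\mathcal{R}$ with $k>2f$ for that pair. Since this holds for every ordered pair, the definitional equivalence from the first paragraph yields $\mathcal{G}\in\mathcal{TC}^\mathcal{R}_k$ and $k>2f$, as required.

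The main obstacle I anticipate is not the quantifier bookkeeping, which is routine, but justifying the parallel single-execution claim of the sufficiency step under the weakened model: I must be sure that concurrent propagation of tuples for many source--content pairs over fair-loss links and with unbounded-but-finite local computation does not break the per-pair delivery guarantee. This follows because the unbounded-capacity assumption lets a process multicast its entire set $\Omega_i$ at every step, so the finite-delay and infinite-retransmission reasoning of Theorem~\ref{th:async10} applies independently to each tuple and no cross-pair interference can occur.
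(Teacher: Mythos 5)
Your proposal is correct and follows essentially the same route as the paper: it reduces the any-to-any problem to the one-to-one problem for every ordered pair via the strict (iff) statement of Theorem~\ref{th:async10}, and observes that $\mathcal{TC}^\mathcal{R}_k$ is precisely the class obtained by requiring $\mathcal{J}_{(s,t,k)}^\mathcal{R}$ for all pairs. Your additional justification that a single execution of Algorithm~\ref{alg:edit} serves all source--target pairs concurrently is a point the paper's terser proof leaves implicit, but it does not change the structure of the argument.
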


\begin{proof}
    The claim follow by extending arguments provided in Theorem \ref{th:async10} to any pair of nodes.
    Given any four nodes $p_s,p_t,p_u,p_v \in V$, $\mathcal{G} \in \mathcal{J}_{(s,t,k)}^\mathcal{R}$ does not imply $\mathcal{J}_{(u,v,k)}^\mathcal{R}$; therefore, the condition must thus be verified for every pair of nodes.
    Definition \ref{def:tempKconn} extends Definition \ref{def:noname0} by considering any pair of processes; the claim thus follows given the strictness of Theorem \ref{th:async10}.
\end{proof}

\noindent Different from the $\langle PL, SC, AL \rangle$ setting, the start time of the RC instance in the $\langle FLL/AC, AL \rangle$ case is irrelevant for the solvability conditions of the examined problems. Further details are reported in Corollaries 
8 and 
9 in Appendix \ref{sec:results}.

All the additional results identified for the defined sub-classes of $\mathcal{TC}^\mathcal{R}_k$ extend in the setting where either links are fair-loss, or local computation is asynchronous, or both.

\begin{corollary}[RC in $\langle FLL/AC, AL \rangle$]
    \label{th:async1}
    Given a setting where either links are fair-loss, or local computation is asynchronous, or both, the reliable communication problem can be solved starting at any time $j$ if $\mathcal{G} \in \mathcal{E}^{\mathcal{R}}_k$ (recurrent edges k-connected) and $k>2f$.
\end{corollary}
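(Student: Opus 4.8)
The plan is to obtain this corollary by simply chaining two results already established in the excerpt: Theorem~\ref{th:anytoanyrecurrent} and Theorem~\ref{th:async3}. First I would observe that the hypothesis $\mathcal{G} \in \mathcal{E}^{\mathcal{R}}_k$ allows us to apply Theorem~\ref{th:anytoanyrecurrent} with the trivial choice $\bar{E} = E$: the spatial subgraph $\mathcal{G}[V, E]$ coincides with $\mathcal{G}$ itself and is therefore (vacuously) a spatial subgraph of class $\mathcal{E}^{\mathcal{R}}_k$. Theorem~\ref{th:anytoanyrecurrent} then yields $\mathcal{G} \in \mathcal{TC}^\mathcal{R}_k$, since its premise is precisely the existence of such a spatial subgraph.

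Next, I would invoke Theorem~\ref{th:async3}, which states that in the $\langle FLL/SC, AL \rangle$ setting the any-to-any reliable communication problem is solvable starting at any time $j$ if and only if $\mathcal{G} \in \mathcal{TC}^\mathcal{R}_k$ and $k > 2f$. The previous step established $\mathcal{G} \in \mathcal{TC}^\mathcal{R}_k$, and the hypothesis supplies $k > 2f$; hence the \emph{sufficient} direction of Theorem~\ref{th:async3} immediately delivers the claim. Structurally this mirrors Corollary~\ref{rm:anytoanyAllSync2}, the $\langle PL, NC, AL \rangle$ analogue, with Theorem~\ref{th:async3} playing the role that Theorem~\ref{rm:anytoanyAllSync} played there.

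There is no genuine obstacle here, as the statement is a direct composition of two earlier theorems. The only point requiring care is recognizing that the direct membership $\mathcal{G} \in \mathcal{E}^{\mathcal{R}}_k$ is a special case of the ``existence of a spatial subgraph of class $\mathcal{E}^{\mathcal{R}}_k$'' premise of Theorem~\ref{th:anytoanyrecurrent}, realized by taking the whole edge set $E$ as $\bar{E}$. It is also worth noting, for consistency of the claim, that only the sufficient direction is asserted (an ``if'', not an ``if and only if''): this matches the fact that $\mathcal{E}^{\mathcal{R}}_k$ is a proper sub-class of $\mathcal{TC}^\mathcal{R}_k$, so $k$-connected recurrent edges are sufficient but not necessary for solvability.
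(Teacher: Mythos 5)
Your proposal is correct and follows exactly the paper's own route: the paper proves this corollary by combining Theorem~\ref{th:anytoanyrecurrent} (to get $\mathcal{G} \in \mathcal{TC}^\mathcal{R}_k$) with Theorem~\ref{th:async3} (sufficiency in the $\langle FLL/SC, AL \rangle$ setting). Your additional observation that direct membership in $\mathcal{E}^{\mathcal{R}}_k$ instantiates the spatial-subgraph premise with $\bar{E}=E$ is a useful explicit detail the paper leaves implicit.
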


\begin{proof}
    It follows combining the results from Theorems \ref{th:anytoanyrecurrent} and \ref{th:async3}.
\end{proof}

\begin{corollary}[RC in $\langle FLL/AC, AL \rangle$]
    \label{anytoany3}
    Given a setting where either links are fair-loss, or local computation is asynchronous, or both, the reliable communication problem can be solved starting at any time $j$, if $\mathcal{G} \in \mathcal{C}^{*}_k$ (1-interval k-connectivity) and $k>2f$.
\end{corollary}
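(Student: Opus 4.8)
The plan is to prove this corollary in exactly the same spirit as its $\langle PL, NC, AL\rangle$ counterpart, Corollary~\ref{cor:anytoany2}, by composing two results already established for the weaker setting. The key observation is that Theorem~\ref{th:async3} already gives a clean characterization: in $\langle FLL/SC, AL\rangle$, any-to-any reliable communication is solvable starting at any time $j$ if (and only if) $\mathcal{G}\in\mathcal{TC}^\mathcal{R}_k$ and $k>2f$. Hence it suffices to show the class inclusion $\mathcal{CK}^{*}_k\subseteq\mathcal{TC}^\mathcal{R}_k$ for evolving graphs of infinite lifetime, and then invoke Theorem~\ref{th:async3} together with the hypothesis $k>2f$.

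First I would recall Theorem~\ref{th:kinterval}, which states that whenever $\mathcal{G}\in\mathcal{CK}^{*}_k$ has infinite lifetime, then $\mathcal{G}_{[j,j+n-k]}\in\mathcal{TC}_k$ for every $j\in\mathcal{T}$. The next step is to lift this from the bounded window $[j,j+n-k]$ to the full tail $[j,*]$: since enlarging the lifetime of a temporal subgraph can only add journeys to $\Sigma(\mathcal{G},p_s,p_t)$ for each pair $p_s,p_t$, and adding sets to a family can only increase the size of a minimum hitting set, the quantity $DynMinCut$ is monotone non-decreasing in the observation window. Consequently $\mathcal{G}_{[j,j+n-k]}\in\mathcal{TC}_k$ implies $\mathcal{G}_{[j,*]}\in\mathcal{TC}_k$. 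As this holds for every $j\in\mathcal{T}$, the definition of $\mathcal{TC}^\mathcal{R}_k$ (Definition~\ref{def:rktc}) is met, giving $\mathcal{G}\in\mathcal{TC}^\mathcal{R}_k$.

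With the inclusion in hand, the conclusion is immediate: since $\mathcal{G}\in\mathcal{TC}^\mathcal{R}_k$ and $k>2f$, Theorem~\ref{th:async3} guarantees that the modified Maurer et al. protocol (Algorithm~\ref{alg:edit}) solves the any-to-any reliable communication problem at any starting time in the $\langle FLL/SC, AL\rangle$ setting. Note that only the sufficient direction is claimed, which is appropriate because $\mathcal{CK}^{*}_k$ is a \emph{strict} subclass of $\mathcal{TC}^\mathcal{R}_k$ (many recurrently $k$-temporally-connected graphs have disconnected individual snapshots), so no converse is expected.

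I do not anticipate a genuine obstacle here, as the argument is purely a composition of Theorems~\ref{th:kinterval} and~\ref{th:async3}. The only subtlety worth stating explicitly is the monotonicity step—that shrinking the observation window cannot create journeys, so the dynamic minimum cut cannot drop below $k$ when passing from $[j,j+n-k]$ to $[j,*]$—together with the requirement that $\mathcal{G}$ have infinite lifetime, which is needed both for Theorem~\ref{th:kinterval} to apply and for membership in $\mathcal{TC}^\mathcal{R}_k$, whose definition presupposes an infinite lifetime.
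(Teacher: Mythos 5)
Your proof is correct, but it takes a different route from the paper's. The paper derives this corollary by composing Theorem~\ref{th:1intToRec} (every infinite-lifetime graph in $\mathcal{CK}^{*}_k$ admits a spatial subgraph in $\mathcal{E}^{\mathcal{R}}_k$) with Corollary~\ref{th:async1} (a spatial subgraph in $\mathcal{E}^{\mathcal{R}}_k$ with $k>2f$ suffices in $\langle FLL/SC, AL\rangle$); that is, it passes through the recurrent-edges class, which makes explicit the structural feature the modified protocol actually exploits under fair-loss links and significant computation, namely links that reappear infinitely often and therefore eventually carry a retransmitted message. You instead reuse Theorem~\ref{th:kinterval} (the same lemma that powers the $\langle PL, NC, AL\rangle$ analogue, Corollary~\ref{cor:anytoany2}), lift the windowed conclusion $\mathcal{G}_{[j,j+n-k]}\in\mathcal{TC}_k$ to $\mathcal{G}_{[j,*]}\in\mathcal{TC}_k$ by the (correct) monotonicity of $DynMinCut$ under enlargement of the journey family, and then invoke the sufficiency direction of Theorem~\ref{th:async3}. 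This is a valid composition of stated results: the recurrent-edge extraction you would otherwise need is already packaged inside the proof of Theorem~\ref{th:async3} (via Theorem~\ref{th:new1} and Corollary~\ref{th:new2}). The trade-off is that your route is slightly more indirect about why the FLL/SC protocol works on such graphs, but it has the merit of treating the $\langle PL,NC\rangle$ and $\langle FLL/SC\rangle$ cases uniformly through the single inclusion $\mathcal{CK}^{*}_k\subseteq\mathcal{TC}^\mathcal{R}_k$; you are also right to claim only the sufficient direction and to flag the infinite-lifetime hypothesis, and right not to carry over the $n-k$ latency bound, which fails once links may drop messages.
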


\begin{proof}
    It follows combining results from Theorems \ref{th:kinterval} and \ref{th:async3}.
\end{proof}

\subsection{Authenticated messages}
Theorem \ref{rm:maurer} by Maurer et al.~\cite{DBLP:conf/srds/MaurerTD15} identifies the base condition enabling one-to-one reliable communication from a defined source $p_s$ and a specific target $p_t$ starting at time $j$, that is, the existence of a set of journeys $p_s \rightsquigarrow_k p_t$ in $\mathcal{G}_{[j,*]}$ with $k>2f$. 
More specifically, there must exist a set of journeys $p_s \rightsquigarrow_{2f+1} p_t$ in $\mathcal{G}_{[j,*]}$ that cannot be cut by any set of $2f$ nodes~\cite{DBLP:conf/srds/MaurerTD15}. 
Maurer et al.~\cite{DBLP:conf/srds/MaurerTD15} additionally studied the one-to-one at time $j$ specification in the perfect links, synchronous computation and \textit{authenticated message} setting. 
We recall its solution in Appendix \ref{sec:results2} and the identified solvability condition in the following.

\begin{theorem}[One-to-one RC at time $t_j$ in $\langle PL, SC, AM \rangle$~\cite{DBLP:conf/srds/MaurerTD15}]
\label{th:maurer2}
    The one-to-one reliable communication problem can be solved starting at time $j$ from a process $p_s$ to a process $p_t$, in the \textit{perfect link}, \textit{synchronous computation}, and \textit{authenticated messages} setting, if and only if it exists a set of journeys $p_s \rightsquigarrow_k p_t$ in $\mathcal{G}_{[t_i,*]}$ and $k>f$.
\end{theorem}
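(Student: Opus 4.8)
The plan is to mirror the structure of the proof of Lemma \ref{lm:maunec}, but to exploit the unforgeability of digital signatures in order to (i) obtain \emph{safety} unconditionally and (ii) lower the required dynamic minimum cut from $2f+1$ down to $f+1$. I would argue sufficiency and necessity separately, invoking Algorithm \ref{alg:maurer2} for the former.

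For sufficiency I would first dispatch safety, which here is independent of connectivity. A correct process inserts $(s,c,\tilde c)$ into $\Omega_i$ only after $verify(pub_s,c,\tilde c)=TRUE$ (rule 2), and a correct source $p_s$ that never invokes \texttt{RC.send($*$, $c$)} never produces $sign(priv_s,c)$. Since Byzantine processes have bounded computing power and cannot fabricate a valid signature of $c$ under $pub_s$, no correct $p_t$ ever stores, and by rule 3 never delivers, a content $p_s$ did not send. For liveness I would reuse the journey/execution correspondence established in the proof of Lemma \ref{lm:maunec}: if $p_s$ triggers \texttt{RC.send($*$, $c$)} at time $j$, then along every journey $p_s \rightsquigarrow p_t$ in $\mathcal{G}_{[j,*]}$ whose intermediate nodes are all correct, the signed tuple propagates faithfully hop by hop (each correct relay verifies it, stores it, and multicasts it), so $p_t$ eventually stores $(s,c,\tilde c)$ and delivers. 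The decisive step is then purely combinatorial: $DynMinCut(\mathcal{G}_{[j,*]},p_s,p_t)=k>f$ means that no set of $f$ nodes is a hitting set of $\Sigma(\mathcal{G}_{[j,*]},p_s,p_t)$, so removing the at most $f$ Byzantine intermediate nodes leaves at least one journey from $p_s$ to $p_t$ intact, and by construction the surviving journey has all its intermediate nodes correct. Liveness follows.

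For necessity I would argue the contrapositive. Suppose $k\le f$, i.e. there is a hitting set $H$ with $|H|\le f$ intersecting every journey from $p_s$ to $p_t$ in $\mathcal{G}_{[j,*]}$. Placing all of $H$ under Byzantine control and letting those processes simply drop every message, every journey from $p_s$ to $p_t$ then contains a node that forwards nothing; since in the $\langle PL, NC, * \rangle$ model a content can reach $p_t$ only by traversing some journey, no tuple carrying $c$ ever reaches $p_t$ and liveness is violated. Hence solvability forces $k>f$.

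The contrast with Theorem \ref{rm:maurer} is exactly where I would focus the exposition: in the authenticated-link case one pays an extra $f$ in the cut because Byzantine relays can \emph{inject} spurious tuples, so the delivery test must demand a hitting set strictly larger than $f$ to discount forgeries; here signatures make forgery impossible, so the only adversarial power left is \emph{suppression} along the at most $f$ journeys the Byzantine nodes can touch, which a cut of size $f+1$ already defeats. The main obstacle I anticipate is phrasing the liveness/combinatorial step cleanly — justifying that ``$DynMinCut>f$ together with at most $f$ Byzantine nodes'' guarantees a surviving all-correct journey — since this is where the hitting-set definition of $DynMinCut$ must be invoked precisely; by comparison the signature-based safety argument and the necessity reduction are routine.
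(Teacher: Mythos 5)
Your proposal is correct: the paper itself states Theorem~\ref{th:maurer2} without proof (it is imported verbatim from Maurer et al.~\cite{DBLP:conf/srds/MaurerTD15}), and your argument is the standard one, mirroring the paper's own intuitive proof of Lemma~\ref{lm:maunec} with the two expected modifications — safety discharged unconditionally by signature unforgeability, and liveness needing only that the at most $f$ Byzantine nodes fail to hit $\Sigma(\mathcal{G}_{[j,*]},p_s,p_t)$, i.e.\ $DynMinCut>f$. The only nit is in the necessity direction: strictly, you should note that if no information about $c$ can reach $p_t$, then $p_t$ delivering anyway would violate safety in an execution where $p_s$ sent a different content, but this is at the same level of informality as the paper's own proofs.
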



\noindent All extension results are collected in the following Corollary.

\medskip
\begin{corollary}
[RC in $\langle *, *, AM \rangle$] 
    All the results available on the reliable communication problem for the authenticated link setting (AL), specifically Theorems \ref{th:anytoanysync}, \ref{rm:anytoanyAllSync}, \ref{th:async10}, \ref{th:async3}, and 11, 
    and Corollaries \ref{rm:anytoanyAllSync2}, \ref{cor:anytoany2},   \ref{th:async1}, \ref{anytoany3}, 
    8 and 
    9, extend to the authenticated message setting (AM) while requiring $k>f$ (instead of $2f$). 
More in detail, the solvability conditions to the reliable communication problem in the authenticated message settings are the ones reported in Table \ref{table:results}.
\end{corollary}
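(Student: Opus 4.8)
The plan is to observe that every authenticated-link result listed in the statement is obtained from a single base result, namely Theorem~\ref{rm:maurer} (one-to-one RC at time $j$ in $\langle PL, NC, AL\rangle$, requiring $k>2f$), via graph-theoretic liftings that are completely agnostic to how $k$ relates to $f$. The authenticated-message setting already supplies its own base result, Theorem~\ref{th:maurer2}, identical in form except that it requires $k>f$ in place of $k>2f$. The whole corollary therefore reduces to checking that each lifting transfers verbatim once the base case is swapped and $2f$ is uniformly replaced by $f$.

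First I would isolate \emph{why} the two base cases differ only in the threshold, since this is what licenses the substitution. In the AL protocol (Algorithm~\ref{alg:maurer}) safety rests on a hitting-set argument: injected content must traverse a Byzantine node, so forged tuples have minimum hitting set at most $f$, and delivery is gated at $MinCut>f$; liveness then needs more than $f$ genuine journeys to survive the $f$ that Byzantine nodes may block, forcing $k>2f$. In the AM protocol (Algorithm~\ref{alg:maurer2}) digital signatures make safety automatic, as forged content can never carry a valid signature, so delivery fires on a single validly signed tuple; liveness then only requires that one genuine journey avoid every Byzantine node, which holds precisely when $k>f$. This is exactly Theorem~\ref{th:maurer2}.

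Next I would replay each lifting on top of Theorem~\ref{th:maurer2}. The \emph{any-starting-time} lifting (Theorems~\ref{rm:oneToOneAllSync} and~\ref{rm:anytoanyAllSync}) quantifies the base condition over all temporal subgraphs $\mathcal{G}_{[j,*]}$; the \emph{any-to-any} lifting (Theorems~\ref{th:anytoanysync} and~\ref{rm:anytoanyAllSync}) quantifies it over all pairs of nodes; and the structural corollaries (Corollaries~\ref{rm:anytoanyAllSync2} and~\ref{cor:anytoany2}) embed the $k$-journey condition into the $\mathcal{E}^{\mathcal{R}}_k$ and $\mathcal{CK}^{*}_k$ classes through Theorems~\ref{th:anytoanyrecurrent} and~\ref{th:kinterval}. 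None of these arguments ever inspects $f$: they merely manipulate the existence of a set of journeys with dynamic minimum cut at least $k$. Hence each proof goes through unchanged with $k>f$ in place of $k>2f$, yielding the AM counterparts.

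The main obstacle, and the only place needing genuine care, is the fair-loss / significant-computation block (Theorems~\ref{th:async10} and~\ref{th:async3}, and Corollaries~\ref{th:async20}, \ref{th:async34}, \ref{th:async1}, \ref{anytoany3}), because those proofs reason about the specific protocol rather than about class membership. Here I would modify Algorithm~\ref{alg:maurer2} exactly as Algorithm~\ref{alg:edit} modifies Algorithm~\ref{alg:maurer}, i.e., multicast $\Omega_i$ at every time instant, and re-run the recurrent-edge propagation argument of Theorem~\ref{th:async10}. Safety is immediate from signatures irrespective of losses or delays; for liveness, Corollary~\ref{th:new2} furnishes $k$ disjoint recurrent-edge paths between $p_s$ and $p_t$, and since $k>f$ at least one such path avoids every Byzantine node, so a validly signed tuple is received infinitely often along it and is eventually delivered for the first time. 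Necessity is inherited as in the AL case: the model is strictly weaker than the corresponding $\langle PL, NC, AM\rangle$ setting, so no condition weaker than the recurrent one can suffice. Assembling these pieces produces the full table of AM solvability conditions with $k>f$ throughout.
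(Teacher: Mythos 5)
Your proposal is correct and follows essentially the same route as the paper: reduce everything to the swap of the base case from Theorem~\ref{rm:maurer} to Theorem~\ref{th:maurer2}, observe that the liftings (any time, any pair, structural subclasses) are agnostic to the relation between $k$ and $f$, and apply the Algorithm~\ref{alg:edit}-style modification to Algorithm~\ref{alg:maurer2} for the fair-loss/significant-computation block. Your version is in fact more explicit than the paper's own proof, which states the same plan in summary form.
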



\begin{proof}
    The Corollary follows from the same argument given for the results presented in Subsection \ref{subsec:al}.
    Theorem \ref{th:maurer2} provides the network conditions for solving the one-to-one problem at a given time.
    The extension follows by identifying classes of evolving graphs for which such conditions are verified, considering every pair of processes and infinitely often occurrences of the conditions.
    The results differ on the parameter $k$, which must be greater than $f$.
    Note that the Maurer et al. solution reported in Algorithm 
    1 (Appendix \ref{sec:maurer}) needs to be modified as reported in Algorithm 2 (Appendix \ref{sec:results})
    to extend to fair-loss links and/or asynchronous computation settings.
\end{proof}
}

\subsection{On the complexity of verifying class membership}
The evolution of the communication network of a distributed system can be \textit{assumed}, in the sense that in certain real deployments it is reasonable to consider a particular model for the dynamic communication network. 
For example, it could be reasonable to assume that a swarm of mobile robots, moving inside a limited area, are repeatedly able to establish temporary communication links, and the resulting dynamic communication network provides recurrent temporal connectivity (class $\mathcal{TC}^\mathcal{R}$).
Alternatively, the \textit{complete characterization} of the evolution of a communication network, such as an evolving graph detailing the sets of available links over time, can be \textit{known in advance}. For example, considering the same example of a swarm of mobile robots, if the schedule of the exact movements of all the robots are known in advance, it is possible to deduce exactly when every pair of robot is able to establish a communication link.
In the latter setting, it is worth to notice that if a characterization of the communication network is provided as an evolving graph, the verification of class membership can be impractical to perform. More in detail, it has been proven~~\cite{DBLP:journals/tcs/FluschnikMNRZ20,DBLP:journals/jcss/KempeKK02,DBLP:journals/jcss/ZschocheFMN20} that it is NP-complete to decide whether the dynamic minimum cut size from a node $p_s$ to $p_t$ in an evolving graph is equal to a certain value $k$.
It follows that the solvability conditions presented in Theorems \ref{rm:maurer}, \ref{th:anytoanysync}, \ref{rm:anytoanyAllSync}, \ref{th:async3}, and 
11, and Corollaries 
8, and 
9, are NP-complete to verify on any temporal subgraph of the evolving graph.
On the other hand, conditions defined on the recurrent edges $k$-connected ($\mathcal{E}^{\mathcal{R}}_k$) and 1-interval $k$-connectivity ($\mathcal{C}^{*}_k$) classes can be verified with a polynomial algorithm on any temporal subgraph of an evolving graph, motivating the analysis of such subclasses of $\mathcal{TC}^\mathcal{R}_k$, particularly in light of the result presented in Corollary \ref{th:new3}.

\begin{figure}
\centering
\resizebox{.25\textwidth}{!}{%
\begin{tikzpicture}
  \draw [-stealth]
  node (TCR){$\mathcal{TC}^{\mathcal{R}}$}
  node (ER) [left=15pt of TCR] { $\mathcal{E}^{\mathcal{R}}$}
  node (CS) [above left=15pt of ER] { $\mathcal{C}^*$}
  node (CSK) [below = 55pt of CS, draw=blue] {\color{blue} $\mathcal{C}^*_k$}
  node (ERK) [below = 8pt of ER, draw=blue] {\color{blue} {$\mathcal{E}^{\mathcal{R}}_k$}}
  node (TCRK) [draw=red, below=15pt of TCR, circle]{\color{red}$\mathcal{TC}^{\mathcal{R}}_k$}

  (CS) edge[blue, dash pattern=on 1pt off 1pt, very thick] (ER)
  (ER) edge (TCR) 
  (CS) edge (TCR)
  (CSK) edge[blue, very thick] (CS)
  (CSK) edge[blue, dash pattern=on 1pt off 1pt, very thick] (ERK)
  (ERK) edge[blue, very thick] (ER)
  (ERK) edge[blue, very thick, bend left=25] (TCRK)
  (TCRK) edge[blue, dash pattern=on 1pt off 1pt, very thick] (ERK)
  (TCRK) edge[blue, very thick] (TCR)
  (CSK) edge[blue, very thick] (TCRK)
  ;
\end{tikzpicture}
 }%
 \caption{Relations between classes of evolving graphs. The classes and relations presented in this work are depicted in square bold blue, dashed edges represent the inclusion relation of a spatial subgraph. In the red circle is the minimal class of evolving graphs where the any-to-any reliable communication problem is solvable at any time under all the settings considered.}
 \label{fig:classes}
\end{figure}
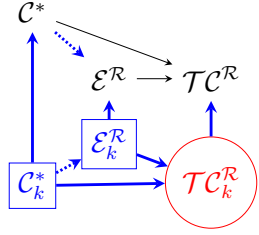

\section{Conclusion}
\label{sec:concl}



In this work, starting from the seminal contribution of Maurer et al.~\cite{DBLP:conf/srds/MaurerTD15}, we characterized the conditions that allow reliable communication between all processes at any time, and identified classes of dynamic networks that satisfy them. 
All the relations we identified between the classes of evolving graphs are summarized in Figure \ref{fig:classes}. 
In particular, class $\mathcal{TC}^\mathcal{R}_k$ is the smallest class of evolving graph for which the problem is solvable under all the settings we considered. Classes $\mathcal{E}^{\mathcal{R}}_k$ and $\mathcal{CK}^{*}_k$ are subclasses of $\mathcal{TC}^\mathcal{R}_k$ that can be verified in polynomial time on any temporal subgraph of an evolving graph.
All the identified results are outlined in Table \ref{table:results}.
\begin{table}
\centering
\caption{Strict Solvability Conditions.}
\begin{tabular}{l|c|}
\cline{2-2}
                                             & Solvability Conditions \\ \hline
\multicolumn{1}{|l|}{$\langle AL, PL, SC\rangle$}             &
\begin{tabular}{@{}c@{}}1-to-1 at $j$ : $\mathcal{J}_{(a,b,k)} \land k>2f$~\cite{DBLP:conf/srds/MaurerTD15}  \\ 1-to-1 at any $j$ : $\mathcal{J}_{(a,b,k)}^\mathcal{R} \land k>2f$ \\  *-to-* at $j$ : $\mathcal{TC}_k \land k>2f$ \\ *-to-* at any $j$ : $\mathcal{TC}_k^{\mathcal{R}} \land k>2f$\end{tabular}
\\ \hline
\multicolumn{1}{|l|}{
\begin{tabular}{@{}c@{}}$\langle AL, FLL, * \rangle$ \\ $\langle AL, *, AC \rangle$\end{tabular}} & 
\begin{tabular}{@{}c@{}}1-to-1 at (or at any) $j$ : $\mathcal{J}_{(a,b,k)}^\mathcal{R} \land k>2f$ \\  *-to-* at (or at any) $j$ : $\mathcal{TC}_k^{\mathcal{R}} \land k>2f$\end{tabular}
\\ \hline
\multicolumn{1}{|l|}{$\langle AM, PL, SC \rangle$}             & 
\begin{tabular}{@{}c@{}}1-to-1 at $j$ : $\mathcal{J}_{(a,b,k)} \land k>f$~\cite{DBLP:conf/srds/MaurerTD15}  \\ 1-to-1 at any $j$ : $\mathcal{J}_{(a,b,k)}^\mathcal{R} \land k>f$ \\  *-to-* at $j$ : $\mathcal{TC}_k \land k>f$ \\ *-to-* at any $j$ : $\mathcal{TC}_k^{\mathcal{R}} \land k>f$\end{tabular}
\\ \hline
\multicolumn{1}{|l|}{\begin{tabular}{@{}c@{}}$\langle AM, FLL, * \rangle$ \\ $\langle AM, *, AC \rangle $\end{tabular}} & 
\begin{tabular}{@{}c@{}}1-to-1 at (or at any) $j$ : $\mathcal{J}_{(a,b,k)}^\mathcal{R} \land k>f$ \\  *-to-* at (or at any) $j$ : $\mathcal{TC}_k^{\mathcal{R}} \land k>f$\end{tabular}
\\ \hline
\end{tabular}
\label{table:results}
\end{table}

Several interesting lines of future research include: compare our  deterministic classes of dynamic networks with those induced by probabilistic models, and identify (with high probability) equivalence conditions between the two models; analyze real datasets of dynamic networks (vehicles, drones, etc.), verifying whether identified conditions are satisfied; extend the study to open dynamic networks (where infinitely many processes may join and leave), or to the more demanding reliable broadcast problem~\cite{DBLP:conf/icdcs/BonomiDFRT21,DBLP:conf/opodis/BonomiFT23} (where the sender also can be Byzantine). 

\begin{credits}
\subsubsection{\ackname}
This work was partially supported by ANR project SAPPORO 2019-CE25-0005.

\subsubsection{\discintname} The authors have no competing interests to declare that are relevant to the content of this article. 
\end{credits}

\bibliographystyle{splncs04}
\bibliography{references}

\begin{thebibliography}{10}
\providecommand{\url}[1]{\texttt{#1}}
\providecommand{\urlprefix}{URL }
\providecommand{\doi}[1]{https://doi.org/#1}

\bibitem{DBLP:conf/icdcs/BonomiDFRT21}
Bonomi, S., Decouchant, J., Farina, G., Rahli, V., Tixeuil, S.: Practical byzantine reliable broadcast on partially connected networks. In: 41st {IEEE} International Conference on Distributed Computing Systems, {ICDCS} 2021, Washington DC, USA, July 7-10, 2021. pp. 506--516. {IEEE} (2021). \doi{10.1109/ICDCS51616.2021.00055}, \url{https://doi.org/10.1109/ICDCS51616.2021.00055}

\bibitem{DBLP:conf/sss/BonomiFT18}
Bonomi, S., Farina, G., Tixeuil, S.: Reliable broadcast in dynamic networks with locally bounded byzantine failures. In: Izumi, T., Kuznetsov, P. (eds.) Stabilization, Safety, and Security of Distributed Systems - 20th International Symposium, {SSS} 2018, Tokyo, Japan, November 4-7, 2018, Proceedings. Lecture Notes in Computer Science, vol. 11201, pp. 170--185. Springer (2018). \doi{10.1007/978-3-030-03232-6\_12}, \url{https://doi.org/10.1007/978-3-030-03232-6\_12}

\bibitem{DBLP:journals/jbcs/BonomiFT19}
Bonomi, S., Farina, G., Tixeuil, S.: Multi-hop byzantine reliable broadcast with honest dealer made practical. J. Braz. Comput. Soc.  \textbf{25}(1),  9:1--9:23 (2019). \doi{10.1186/s13173-019-0090-x}, \url{https://doi.org/10.1186/s13173-019-0090-x}

\bibitem{DBLP:conf/opodis/BonomiFT23}
Bonomi, S., Farina, G., Tixeuil, S.: Reliable broadcast despite mobile byzantine faults. In: Bessani, A., D{\'{e}}fago, X., Nakamura, J., Wada, K., Yamauchi, Y. (eds.) 27th International Conference on Principles of Distributed Systems, {OPODIS} 2023, December 6-8, 2023, Tokyo, Japan. LIPIcs, vol.~286, pp. 18:1--18:23. Schloss Dagstuhl - Leibniz-Zentrum f{\"{u}}r Informatik (2023). \doi{10.4230/LIPICS.OPODIS.2023.18}, \url{https://doi.org/10.4230/LIPIcs.OPODIS.2023.18}

\bibitem{10.1007/978-3-032-26465-7_7}
Bonomi, S., Farina, G., Tixeuil, S.: On the solvability of byzantine-tolerant reliable communication in dynamic networks. In: Georgiou, C. (ed.) Structural Information and Communication Complexity. pp. 112--130. Springer Nature Switzerland, Cham (2026). \doi{10.1007/978-3-032-26465-7_7}

\bibitem{BONOMI2024104952}
Bonomi, S., Farina, G., Tixeuil, S.: Reliable communication in dynamic networks with locally bounded byzantine faults. Journal of Parallel and Distributed Computing  \textbf{193},  104952 (2024). \doi{https://doi.org/10.1016/j.jpdc.2024.104952}, \url{https://www.sciencedirect.com/science/article/pii/S0743731524001163}

\bibitem{DBLP:books/daglib/0025983}
Cachin, C., Guerraoui, R., Rodrigues, L.E.T.: Introduction to Reliable and Secure Distributed Programming {(2.} ed.). Springer (2011). \doi{10.1007/978-3-642-15260-3}, \url{https://doi.org/10.1007/978-3-642-15260-3}

\bibitem{DBLP:books/hal/Casteigts18}
Casteigts, A.: A Journey through Dynamic Networks (with Excursions) (2018), \url{https://tel.archives-ouvertes.fr/tel-01883384}

\bibitem{DBLP:journals/paapp/CasteigtsFQS12}
Casteigts, A., Flocchini, P., Quattrociocchi, W., Santoro, N.: Time-varying graphs and dynamic networks. Int. J. Parallel Emergent Distributed Syst.  \textbf{27}(5),  387--408 (2012). \doi{10.1080/17445760.2012.668546}, \url{https://doi.org/10.1080/17445760.2012.668546}

\bibitem{castro1999practical}
Castro, M., Liskov, B.: Practical byzantine fault tolerance. In: Seltzer, M.I., Leach, P.J. (eds.) Proceedings of the Third {USENIX} Symposium on Operating Systems Design and Implementation (OSDI), New Orleans, Louisiana, USA, February 22-25, 1999. pp. 173--186. {USENIX} Association (1999), \url{https://dl.acm.org/citation.cfm?id=296824}

\bibitem{Diestel_2017}
Diestel, R.: Graph Theory. Springer Berlin Heidelberg (2017). \doi{10.1007/978-3-662-53622-3}

\bibitem{DBLP:conf/focs/Dolev81}
Dolev, D.: Unanimity in an unknown and unreliable environment. In: 22nd Annual Symposium on Foundations of Computer Science, Nashville, Tennessee, USA, 28-30 October 1981. pp. 159--168. {IEEE} Computer Society (1981). \doi{10.1109/SFCS.1981.53}, \url{https://doi.org/10.1109/SFCS.1981.53}

\bibitem{DBLP:conf/iptps/Douceur02}
Douceur, J.R.: The sybil attack. In: Peer-to-Peer Systems, First International Workshop, {IPTPS} 2002, Cambridge, MA, USA, March 7-8, 2002, Revised Papers. pp. 251--260 (2002). \doi{10.1007/3-540-45748-8\_24}

\bibitem{drabkin2005efficient}
Drabkin, V., Friedman, R., Segal, M.: Efficient byzantine broadcast in wireless ad-hoc networks. In: 2005 International Conference on Dependable Systems and Networks {(DSN} 2005), 28 June - 1 July 2005, Yokohama, Japan, Proceedings. pp. 160--169. {IEEE} Computer Society (2005). \doi{10.1109/DSN.2005.42}, \url{https://doi.org/10.1109/DSN.2005.42}

\bibitem{DBLP:journals/network/Ferreira04}
Ferreira, A.: Building a reference combinatorial model for manets. {IEEE} Netw.  \textbf{18}(5),  24--29 (2004). \doi{10.1109/MNET.2004.1337732}, \url{https://doi.org/10.1109/MNET.2004.1337732}

\bibitem{DBLP:journals/tcs/FluschnikMNRZ20}
Fluschnik, T., Molter, H., Niedermeier, R., Renken, M., Zschoche, P.: Temporal graph classes: {A} view through temporal separators. Theor. Comput. Sci.  \textbf{806},  197--218 (2020). \doi{10.1016/j.tcs.2019.03.031}, \url{https://doi.org/10.1016/j.tcs.2019.03.031}

\bibitem{DBLP:books/fm/GareyJ79}
Garey, M.R., Johnson, D.S.: Computers and Intractability: {A} Guide to the Theory of NP-Completeness. W. H. Freeman (1979)

\bibitem{DBLP:journals/jcss/KempeKK02}
Kempe, D., Kleinberg, J.M., Kumar, A.: Connectivity and inference problems for temporal networks. J. Comput. Syst. Sci.  \textbf{64}(4),  820--842 (2002). \doi{10.1006/jcss.2002.1829}, \url{https://doi.org/10.1006/jcss.2002.1829}

\bibitem{koo2004broadcast}
Koo, C.: Broadcast in radio networks tolerating byzantine adversarial behavior. In: Chaudhuri, S., Kutten, S. (eds.) Proceedings of the Twenty-Third Annual {ACM} Symposium on Principles of Distributed Computing, {PODC} 2004, St. John's, Newfoundland, Canada, July 25-28, 2004. pp. 275--282. {ACM} (2004). \doi{10.1145/1011767.1011807}, \url{https://doi.org/10.1145/1011767.1011807}

\bibitem{DBLP:conf/stoc/KuhnLO10}
Kuhn, F., Lynch, N.A., Oshman, R.: Distributed computation in dynamic networks. In: Schulman, L.J. (ed.) Proceedings of the 42nd {ACM} Symposium on Theory of Computing, {STOC} 2010, Cambridge, Massachusetts, USA, 5-8 June 2010. pp. 513--522. {ACM} (2010). \doi{10.1145/1806689.1806760}, \url{https://doi.org/10.1145/1806689.1806760}

\bibitem{DBLP:conf/opodis/Maurer20}
Maurer, A.: Self-stabilizing byzantine-resilient communication in dynamic networks. In: Bramas, Q., Oshman, R., Romano, P. (eds.) 24th International Conference on Principles of Distributed Systems, {OPODIS} 2020, December 14-16, 2020, Strasbourg, France (Virtual Conference). LIPIcs, vol.~184, pp. 27:1--27:11. Schloss Dagstuhl - Leibniz-Zentrum f{\"{u}}r Informatik (2020). \doi{10.4230/LIPIcs.OPODIS.2020.27}, \url{https://doi.org/10.4230/LIPIcs.OPODIS.2020.27}

\bibitem{DBLP:conf/wdag/MaurerT12}
Maurer, A., Tixeuil, S.: On byzantine broadcast in loosely connected networks. In: Aguilera, M.K. (ed.) Distributed Computing - 26th International Symposium, {DISC} 2012, Salvador, Brazil, October 16-18, 2012. Proceedings. Lecture Notes in Computer Science, vol.~7611, pp. 253--266. Springer (2012). \doi{10.1007/978-3-642-33651-5\_18}, \url{https://doi.org/10.1007/978-3-642-33651-5\_18}

\bibitem{DBLP:journals/jpdc/MaurerT14}
Maurer, A., Tixeuil, S.: Byzantine broadcast with fixed disjoint paths. J. Parallel Distributed Comput.  \textbf{74}(11),  3153--3160 (2014). \doi{10.1016/J.JPDC.2014.07.010}, \url{https://doi.org/10.1016/j.jpdc.2014.07.010}

\bibitem{DBLP:journals/tpds/MaurerT15}
Maurer, A., Tixeuil, S.: Containing byzantine failures with control zones. {IEEE} Trans. Parallel Distributed Syst.  \textbf{26}(2),  362--370 (2015). \doi{10.1109/TPDS.2014.2308190}, \url{https://doi.org/10.1109/TPDS.2014.2308190}

\bibitem{DBLP:journals/ppl/MaurerT16}
Maurer, A., Tixeuil, S.: Tolerating random byzantine failures in an unbounded network. Parallel Process. Lett.  \textbf{26}(1),  1650003:1--1650003:12 (2016). \doi{10.1142/S0129626416500031}, \url{https://doi.org/10.1142/S0129626416500031}

\bibitem{DBLP:conf/srds/MaurerTD15}
Maurer, A., Tixeuil, S., D{\'{e}}fago, X.: Communicating reliably in multihop dynamic networks despite byzantine failures. In: 34th {IEEE} Symposium on Reliable Distributed Systems, {SRDS} 2015, Montreal, QC, Canada, September 28 - October 1, 2015. pp. 238--245. {IEEE} Computer Society (2015). \doi{10.1109/SRDS.2015.10}, \url{https://doi.org/10.1109/SRDS.2015.10}

\bibitem{Menger1927}
Menger, K.: Zur allgemeinen kurventheorie. Fundamenta Mathematicae  \textbf{10}(1),  96--115 (1927), \url{http://eudml.org/doc/211191}

\bibitem{DBLP:journals/dc/PagourtzisPS17}
Pagourtzis, A., Panagiotakos, G., Sakavalas, D.: Reliable broadcast with respect to topology knowledge. Distributed Comput.  \textbf{30}(2),  87--102 (2017). \doi{10.1007/s00446-016-0279-6}, \url{https://doi.org/10.1007/s00446-016-0279-6}

\bibitem{DBLP:journals/networks/Pelc92}
Pelc, A.: Reliable communication in networks with byzantine link failures. Networks  \textbf{22}(5),  441--459 (1992). \doi{10.1002/net.3230220503}, \url{https://doi.org/10.1002/net.3230220503}

\bibitem{DBLP:journals/networks/Pelc96}
Pelc, A.: Fault-tolerant broadcasting and gossiping in communication networks. Networks  \textbf{28}(3),  143--156 (1996). \doi{10.1002/(SICI)1097-0037(199610)28:3<143::AID-NET3>3.0.CO;2-N}, \url{https://doi.org/10.1002/(SICI)1097-0037(199610)28:3<143::AID-NET3>3.0.CO;2-N}

\bibitem{DBLP:journals/ipl/PelcP05}
Pelc, A., Peleg, D.: Broadcasting with locally bounded byzantine faults. Inf. Process. Lett.  \textbf{93}(3),  109--115 (2005). \doi{10.1016/j.ipl.2004.10.007}, \url{https://doi.org/10.1016/j.ipl.2004.10.007}

\bibitem{DBLP:journals/wc/ZengGM10}
Zeng, K., Govindan, K., Mohapatra, P.: Non-cryptographic authentication and identification in wireless networks. {IEEE} Wireless Commun.  \textbf{17}(5),  56--62 (2010). \doi{10.1109/MWC.2010.5601959}

\bibitem{DBLP:journals/jcss/ZschocheFMN20}
Zschoche, P., Fluschnik, T., Molter, H., Niedermeier, R.: The complexity of finding small separators in temporal graphs. J. Comput. Syst. Sci.  \textbf{107},  72--92 (2020). \doi{10.1016/J.JCSS.2019.07.006}, \url{https://doi.org/10.1016/j.jcss.2019.07.006}

\end{thebibliography}

\appendix

\section*{Appendix}

\section{Maurer et al.~\cite{DBLP:conf/srds/MaurerTD15} Reliable Communication Results}
\label{sec:maurer}

Maurer et al.~\cite{DBLP:conf/srds/MaurerTD15} defined a protocol solving the reliable communication problem in the perfect authenticated links and synchronous computation setting.
The solution is provided in Algorithm \ref{alg:maurer}.
Every message exchanged by the protocol has the format $(s,c,S)$, where $s$ is the identifier of the source of the relayed content $c$, whereas $S$ is the set of nodes traversed by the content. 
Byzantine processes are able to generate messages with arbitrary fields.
Correct processes only deliver the contents that have been received through a collection of journeys that cannot be cut by $f$ nodes (where $f$ is a parameter of the protocol and supposed to be the upper bound of the total number of faulty nodes).

Initially, $\Omega_i$ in Algorithm \ref{alg:maurer} is initialized to the empty set.
When a process $p_s$ aims to reliably communicate a content $c$ at time $j$, it triggers the \mbox{\texttt{{RC.send($*$, $c$)}}} operation. 
This action adds $(s,c,\emptyset)$ to $\Omega_s$ and triggers \texttt{RC.deliver($s$, $c$)}\footnote{Notice that the reliable communication specification allows any correct process to reliable deliver a content as long as it has been sent by its source.}.
The solution is then defined by three rules: \textit{(i)} whenever the $\Omega_i$ or the set of neighbors changes, the process multicast $\Omega_i$; \textit{(ii)} upon receiving $\Omega_r$ from process $p_r$, the sender identifier $r$ is added to the set $S$ contained in each tuple $(s,c,S)$ in $\Omega_r$, and the resulting tuple is inserted in $\Omega_i$\footnote{Note that $S$ and $\Omega_i$ are sets; thus, the insertion of an already-contained element does not alter them.}; \textit{(iii)} whenever, among all the sets $S_x$ associated to a source $s$ and content $c$ in $\Omega_i$, it is possible to identify a minimum hitting set of size greater than $f$, then the content $c$ from $p_s$ is delivered.

\begin{algorithm}
	\caption{Maurer et al.~\cite{DBLP:conf/srds/MaurerTD15} Authenticated Link solution}
	\begin{algorithmic}[1] 

    \Statex Each correct node $p_i$ maintains the following variable:
\begin{itemize}
    \item $\Omega_i$, a dynamic set registering all the tuples $(s,c,S)$ received by the process;
\end{itemize}
\Statex Upon \texttt{{RC.send($*$, $c$)}} do
\begin{itemize}
    \item add $(s,c, \emptyset)$ to $\Omega_i$;
    \item \texttt{RC.deliver($s$, $c$)}.
\end{itemize}
\Statex 
Each correct process $p_i$ adheres the following rules (\emph{compute phase}):
\begin{enumerate}
    \item Whenever $\Omega_i$ or the set of neighbors changes, multicast $\Omega_i$;
    \item Upon reception of $\Omega_r$ through a link $\{p_r,p_i\}$, $\forall (s,c,S) \in \Omega_r$, add $(s,c, S \cup \{r\})$ to $\Omega_i$;
    \item Whenever there exists $s$, $c$ and a collection $\{S_1,\dots, S_l\}$ such that, $\forall x \in \{1,\dots,l\},(s,c,S_x \cup \{s\}) \in \Omega_i$ and $MinCut(\{S_1,\dots,S_l\}) > f$, trigger \texttt{RC.deliver($s$, $c$)}.
\end{enumerate}
		
	\end{algorithmic}
    \label{alg:maurer}
\end{algorithm}

\begin{lemma}[Necessary Condition to One-to-One RC at time $j$ in $\langle PL, SC, AL \rangle$~\cite{DBLP:conf/srds/MaurerTD15}]
    \label{lm:mausuff}
    For a given evolving graph $\mathcal{G}$, let us suppose that there exists an algorithm ensuring one-to-one reliable communication from $p_s$ to $p_t$ at time $j$. Then, we necessarily have $DynMinCut(\mathcal{G}_{[j,*]},p_s,p_t)>2f$.
\end{lemma}

\begin{lemma}[Sufficient Condition to One-to-one RC at time $j$ in $\langle PL, SC, AL \rangle$~\cite{DBLP:conf/srds/MaurerTD15}]
    \label{lm:maunec}
    For a given evolving graph $\mathcal{G}$, let $p_s$ and $p_t$ be two correct processes. If $DynMinCut(\mathcal{G}_{[j,*]},p_s,p_t)>2f$ then the Maurer et al. solution (Algorithm \ref{alg:maurer}) ensures one-to-one RC at time $j$ in $\langle PL, SC, AL \rangle$.
\end{lemma}

\begin{proof}
    The proofs for Lemmas \ref{lm:mausuff} and \ref{lm:maunec} are provided in~\cite{DBLP:conf/srds/MaurerTD15}.
    Given that several of the subsequent results follow from these seminal Lemmas, we provide an intuitive proof for the sufficient condition (Lemma \ref{lm:maunec}). 

    Let us refer with $\Pi_{s,t,j}(P)$ to the set of all the journeys in $\mathcal{G}_{[j,*]}$ passing through the path $P \in \mathbb{G}$ between $p_s$ and $p_t$.
    The assumption $DynMinCut(\mathcal{G}_{[j,*]},p_s,p_t)>2f$ implies that $\Pi_{s,t,j}(P)$ is not empty for at least one path $P \in \mathbb{G}$.
    We start by showing there is a correspondence between the journeys in $\mathcal{G}_{[j,*]}$ and the evolution of Algorithm \ref{alg:maurer}. Specifically, that if a process $p_s$ triggers \texttt{{RC.send($*$, $c$)}} at time $j$, then $\Omega_t$ will eventually contain $(s,c,P\setminus\{t\})$.
    For the sake of contradiction, let us assume that process $p_s$ triggers \texttt{{RC.send($*$, $c$)}} at time $j$, that the set $\Pi_{s,t,j}(P)$ is not empty, and that process $p_t$ will never store $(s,c,P \setminus \{t\})$ in $\Omega_t$.
    The \texttt{{RC.send($*$, $c$)}} trigger at time $j$ inserts $(s,c,\emptyset)$ to $\Omega_s$. 
    Such an insertion satisfies rule 1 of Algorithm \ref{alg:maurer}, thus if a link is present between a process $p_s$ and a process $p_u$ at time $j$, where $p_u$ is the second element of $P$, $(s,c,\emptyset)$ is sent to $p_u$, and $(s,c,\{s\})$ is then stored in $\Omega_u$.
    If the link $\{p_s,p_u\}$ is not available at time $j$ it will be at time $l>j$. Then, rule 1 of Algorithm \ref{alg:maurer} will be satisfied, and $(s,c,\{s\})$ will be stored in $\Omega_u$ at time $l$.
    If subsequently the insertion in $\Omega_u$ a link is present between $p_u$ and a process $p_v$, where $p_v$ is the third element of $P$, $(s,c,\{s\})$ is sent to $p_v$, and $(s,c,\{s,u\})$ is then stored in $\Omega_v$.
    The argument extends till the last element of $P$ leading to a contradiction.

    We proceed with proving the liveness property of reliable communication, thus showing that if $DynMinCut(\mathcal{G}_{[j,*]},p_s,p_t)>2f$ and $\texttt{{RC.send($*$, $c$)}}$ is triggered at time $j$, then \texttt{RC.deliver($s$, $c$)} will eventually be triggered by process $p_t$.
    The assumption $DynMinCut(\mathcal{G}_{[j,*]},p_s,p_t)>2f$ implies the existence of a set of journeys $Q = (p_s \rightsquigarrow_{>2f} p_t) \in \mathcal{G}_{[j,*]}$. 
    It follows from what has been proven above that $p_t$ will eventually store  $(s,c,P \setminus \{t\})$ in $\Omega_t$ for each path $P$ underlying a journey in $Q$. 
    The Byzantine processes can block only the propagation of tuples $(s,c,*)$ passing through them, thus at most $f$ journeys in $Q$ may not lead to the reception of the corresponding tuple by correct processes. 
    Therefore each correct process will eventually receive at least $f+1$ corresponding tuple among $Q$, ensuring eventual delivery of the content $c$.

    We conclude by proving the safety of reliable communication. 
    Let us assume that process $p_s$ does not trigger $\texttt{{RC.send($*$, $c$)}}$ and that Byzantine processes multicast tuples $(s,c,P)$, thus pretending content $c$ has been diffused by $p_s$.
    Each correct peer that receives $(s,c,P)$ from a Byzantine process $p_b$ adds the identifier $b$ of its sender to $P$ and then multicasts $(s,c,P \cup\{b\})$. 
    It follows that all the tuples $(s,c,P)$ stored by correct processes contain at least one identifier of a Byzantine process in $P$. 
    Given that Byzantine processes are at most $f$, the minimum hitting set size of all the received tuples associated to $(s,c)$ cannot exceed $f$, and the claim follows.
\end{proof}

\section{Intermediate Results and Missing Proofs}
\label{sec:results}
\subsection{Authenticated Links}
\subsubsection{One-to-one RC in $\langle PL, SC, AL \rangle$.}

The condition provided by Maurer et al.~\cite{DBLP:conf/srds/MaurerTD15} can easily be extended to consider any starting time $j$ in the class $\mathcal{J}_{(s,t,k)}^\mathcal{R}$.

\begin{theorem}[One-to-one RC in $\langle PL, SC, AL \rangle$]
    \label{rm:oneToOneAllSync}
    The one-to-one reliable communication problem can be solved starting at any time $j$, from a process $p_s$ to a process $p_t$, in the \textit{perfect authenticated links} and \textit{synchronous computation} setting, if and only if $\mathcal{G} \in \mathcal{J}_{(s,t,2f+1)}^\mathcal{R}$.  
\end{theorem}

\begin{proof}
    It follows from the construction of the $\mathcal{J}_{(s,t,k)}^\mathcal{R}$ class, extending Theorem \ref{rm:maurer}, that the condition identified by Maurer et al. is verified in $\mathcal{G}$ for any time $j$ according to the class specification.

    \textit{(Sufficient)} - Lemma \ref{lm:maunec} guarantees reliable communication from $p_s$ to $p_t$ at time $x$ if $\mathcal{G}_{[x,*]} \in \mathcal{J}_{(s,t,2f+1)}$; if $\forall j \in \mathcal{T}, \mathcal{G}_{[j,*]} \in \mathcal{J}_{(s,t,2f+1)}$ then it follows that the problem is solvable in the specific setting at any time $j$.

    \textit{(Necessary)} - Lemma \ref{lm:mausuff} demands $\mathcal{G}_{[x,*]} \in \mathcal{J}_{(s,t,2f+1)}$ to achieve reliable communication from $p_s$ to $p_t$ at time $x$. 
    A second instance between the same endpoints at time $y > x$ requires $\mathcal{G}_{[y,*]} \in \mathcal{J}_{(s,t,2f+1)}$ to be satisfied.
    $\mathcal{G}_{[y,*]} \in \mathcal{J}_{(s,t,2f+1)}$ implies $\mathcal{G}_{[x,*]} \in \mathcal{J}_{(s,t,2f+1)}$ when $y>x$, but the opposite relation does not hold. 
    The claim thus follows from the definition of class $\mathcal{J}_{(s,t,2f+1)}^\mathcal{R}$ by extending previous argument to any time $j \in \mathcal{T}$: $\mathcal{G}_{[j,*]} \in \mathcal{J}_{(s,t,2f+1)}$ will be required for any time $j$.
\end{proof}

\noindent The same conditions identified in Theorem \ref{rm:oneToOneAllSync} are necessary and sufficient to solve one-to-one reliable communication at time $j$, with the addition of transient failures~\cite{DBLP:conf/opodis/Maurer20}.

%
%

\subsubsection{Mauerer et al. protocol extension in FLL-AC.}

The Maurer et al. protocol has been proven correct to support reliable communication in the PL and SC setting. In order to extend its validity to weaker scenarios with FLL and AC, we modify the first of its rules as reported in Algorithm \ref{alg:edit}, while keeping the rest unchanged.
Fair-loss links do not guarantee single message deliveries but infinite deliveries in case of infinite transmissions.
It follows that, after a finite number of transmission attempts for a message $m$, $m$ is delivered for the first time. We thus edit Algorithm \ref{alg:maurer} accordingly in order to overcome such link behavior.

\begin{algorithm}
	\caption{Modification of the Maurer et al. solution.}
	\begin{algorithmic}[1] 

\Statex [...]

\begin{enumerate}
    \item Multicast $\Omega_i$ at every time $j$;
\end{enumerate}

\Statex [...]

	\end{algorithmic}
    \label{alg:edit}
\end{algorithm}

\noindent\textbf{Theorem \ref{th:async10} (One-to-one RC in $\langle FLL / AC, AL \rangle$).} 
    Given a setting where either links are fair-loss, or local computation is asynchronous, or both, the one-to-one reliable communication problem can be solved starting at any time $j$ if and only if $\mathcal{G} \in \mathcal{J}_{(s,t,k)}^\mathcal{R}$ and $k>2f$.

\begin{proof}
    \textit{(Sufficient)}: We show that Algorithm \ref{alg:edit} solves reliable communication in the considered setting.
    The AC assumption may delay the local compute phase on processes for an unknown but finite amount of time, namely, the execution of all three rules of Algorithm \ref{alg:edit} and the execution of the $\texttt{{RC.send($*$, $c$)}}$ operation may take a non-unitary amount of time. 
    We have from the FLL assumption that if a link is present infinitely often and a process sends a message through such a link repetitively, then such a message is received infinitely often by the link specification.
    It follows that if a process sends a message over an FLL link at time $j$, and at an infinite amount of subsequent times, then such a message will be received for the first time after a finite amount of time.

    Let us consider $P$ as a path between $p_s$ and $p_t$ in $\mathbb{G}$ whose edges re-appear infinitely often (its existence is guaranteed by Theorem \ref{th:new1}).
    We start by showing that if Algorithm \ref{alg:edit} is executed and process $p_s$ triggers $\texttt{{RC.send($*$, $c$)}}$ at time $j$, then process $p_t$ eventually stores $(s,c,P \setminus \{t\})$ in $\Omega_t$.
    The \texttt{{RC.send($*$, $c$)}} trigger at time $j$ inserts $(s,c,\emptyset)$ to $\Omega_s$. 
    The AC assumption may delay such an insertion for a finite amount of time $\delta$.
    Meanwhile, rule 1 is satisfied for the entire lifetime of the system. 
    It follows that, starting from time $j+\delta$, $(s,c,\emptyset)$ is enqueued to be multicast at every time instant, thus for an infinite amount of times.
    Again, the execution of such a rule may take an arbitrary but finite amount of time.
    The first edge $e$ of P, let's say $\{p_s, p_r\}$, will be present infinitely often. 
    It follows that: \textit{(i)} $(s,c,\emptyset)$ is enqueued to be sent infinitely often, and \textit{(ii)} the link $e$ will be present infinitely often. 
    Accordingly, after a finite amount of time, $(s,c,\emptyset)$ will be received by process $p_r$ for the first time.
    The argument extends till traversing all the processes in $P$, and generalizes to $k$ disjoint paths of recurrent edges available in $\mathcal{J}_{(s,t,k)}^\mathcal{R}$. For the same reasons provided in the proof of Lemma \ref{lm:maunec}, the claim follows.  

    \textit{(Necessary)}:
    Theorem \ref{rm:oneToOneAllSync} states that the same class $\mathcal{J}_{(s,t,2f)}^\mathcal{R}$ is necessary in a system model where PL and SC are assumed. The system model is weakened in the considered setting; indeed, links may lose messages, and computation may take arbitrary time. It follows that stricter necessary conditions cannot be found.
\end{proof}

\begin{corollary}[One-to-one RC at time $j$ in $\langle FLL/AC, AL \rangle$]
    \label{th:async20}
    Given a setting where either links are fair-loss, or local computation is asynchronous, or both, the one-to-one reliable communication problem can be solved starting at time $j$ if and only if $\mathcal{G} \in \mathcal{J}_{(a,b,k)}^\mathcal{R}$ (Recurrent k-journeys from $p_a$ to $p_b$) and $k>2f$.
\end{corollary}

\begin{corollary}[RC at time $j$ in $\langle FLL/AC, AL \rangle$]
    \label{th:async34}
    Given a setting where either links are fair-loss, or local computation is asynchronous, or both, the any-to-any reliable communication problem can be solved starting at time $j$ if and only if $\mathcal{G} \in \mathcal{TC}^\mathcal{R}_k$ (Recurrent $k$-Temporal-Connectivity) and $k>2f$.
\end{corollary}

\begin{proof}
    {\textit{(Sufficient)}: Theorems \ref{th:async10} and \ref{th:async3} define the enabling one-to-one and any-to-any conditions in the considered setting at any time $j\in\mathcal{T}$; thus, the conditions hold for a specific starting time.}

    {\textit{(Necessary)}}: For the same argument provided in Theorem \ref{th:async10}, the FLL and AC assumptions may not support message propagation for a limited but unknown amount of time, thus requiring the enabling condition to be verified indefinitely.
\end{proof}

\subsection{Authenticated Messages}
\label{sec:results2}

Maurer et al.~\cite{DBLP:conf/srds/MaurerTD15} provided another protocol solving the reliable communication problem in the authenticated message and synchronous computation setting.
The solution is provided in Algorithm \ref{alg:maurer2}.

The solution assumes an asymmetric encryption system is set up on the processes. 
Each process $p_i$ has a private key $priv_i$ (only known by $p_i$) and a public key $pub_i$ (known by all processes). 
The node $p_i$ can digitally sign a content $c$ with the function $sign(priv_i,c)$. 
Any node $p_t$ can verify content from $p_i$ with the function $verify(pub_i,c,\tilde{c})$. Note that $verify(pub_i,c,sign(priv_i,c)) = TRUE$.

\begin{algorithm}
	\caption{Maurer et al.~\cite{DBLP:conf/srds/MaurerTD15} Authenticated Message Solution}
	\begin{algorithmic}[1] 

    \Statex Each correct node $p_i$ maintains the following variable:
\begin{itemize}
    \item $\Omega_i$, a dynamic set registering all the tuples $(s,c,\tilde{c})$ received by the process;
\end{itemize}
\Statex Upon \texttt{{RC.send($*$, $c$)}} do
\begin{itemize}
    \item add $(s,c, sign(priv_i,c))$ to $\Omega_i$;
    \item \texttt{RC.deliver($s$, $c$)}.
\end{itemize}
\Statex 
Each correct process $p_i$ follows the following rules:
\begin{enumerate}
    \item Whenever $\Omega_i$ or the set of neighbors changes, multicast $\Omega_i$;
    \item Upon reception of $\Omega'$ through a link $\{p_x,p_i\}$, $\forall (s,c,\tilde{c}) \in \Omega'$, if $verify(pub_s,c,\tilde{c})$ then add $(s,c,\tilde{c})$ to $\Omega_i$;
    \item Whenever there exists $(s,c,\tilde{c}) \in \Omega_i$, trigger \texttt{RC.deliver($s$, $c$)}
\end{enumerate}
		
	\end{algorithmic}
    \label{alg:maurer2}
\end{algorithm}

Every message $m$ exchanged by the protocol has the format $(s,c,\tilde{c})$, where $s$ is the identifier of the source of the relayed content $c$, whereas $\tilde{c}$ is a digital signature. 
Byzantine processes are able to generate messages with arbitrary fields, but they are unable to generate a valid digital signature of content $c$ from a correct process.
A correct process only accepts contents that are relayed with a valid digital signature $\tilde{c}$, namely it accepts a content $c$ from $p_s$ only if it receives $\tilde{c}$ such that $verify(pub_s,c,\tilde{c}) = TRUE$.

\end{document}